\newtheorem{theorem}{Theorem}
\newtheorem{proposition}{Proposition}
\newtheorem{lemma}[theorem]{Lemma}
\newenvironment{proof}[1][Proof]{\noindent\textbf{#1.} }{\ \rule{0.5em}{0.5em}}
\newcolumntype{L}[1]{>{\raggedright\let\newline\\arraybackslash\hspace{0pt}}m{#1}}
\newcolumntype{C}[1]{>{\centering\let\newline\\arraybackslash\hspace{0pt}}m{#1}}
\newcolumntype{R}[1]{>{\raggedleft\let\newline\\arraybackslash\hspace{0pt}}m{#1}}
\title{``Zero Cost'' Majority Attacks on Permissionless Blockchains\thanks{The latest version of this paper is available at joshuagans.com. Thanks to Yackolley Amoussou-Guenou, Yannis Bakos, Joe Bonneau, Eric Budish, Guillaume Haeringer, Raveesh Mayya and seminar participants at the a16z Crypto Lab for useful discussions. All errors remain our own.}}
\author{Joshua S. Gans\thanks{corresponding author, Rotman School of Management, University of Toronto and NBER, email: joshua.gans@utoronto.ca} \and Hanna Halaburda\thanks{Stern School of Management, New York University, email: hhalaburda@gmail.com}}
\begin{document}

\maketitle

\begin{abstract}
The core premise of permissionless blockchains is their reliable and secure operation without the need to trust any individual agent. At the heart of blockchain consensus mechanisms is an explicit cost (whether work or stake) for participation in the network and the opportunity to add blocks to the blockchain. A key rationale for that cost is to make attacks on the network, which could be theoretically carried out if a majority of nodes were controlled by a single entity, too expensive to be worthwhile. We demonstrate that a majority attacker can successfully attack with a {\em negative cost}, which shows that the protocol mechanisms are insufficient to create a secure network, and emphasizes the importance of socially driven mechanisms external to the protocol. At the same time, negative cost enables a new type of majority attack that is more likely to elude external scrutiny.
   
\bigskip
\bigskip
   
\noindent   \textit{Keywords}: consensus, transaction fees, blockchain, bitcoin, proof of work, proof of stake
\end{abstract}

\newpage

\section{Introduction}

The introduction of Bitcoin by \cite{nakamoto2008bitcoin} ushered in an era of permissionless blockchains. These blockchains promise reliable and secure operation without the need to trust any individual agent. Over a decade later, these blockchains have expanded beyond Bitcoin and now attract billions of dollars worth of economic activity, as well as new commercial developments. However, their security vulnerabilities are still not fully understood.

Nakamoto's innovative trust system relies on the blockchain protocol. However, these networks function as intended only if the majority of the controlling force (mining or staking) behaves ``honestly,'' i.e., follows the protocol. A deviating majority could influence the blockchain record -- double-spending, censoring transactions, or sabotaging the network. It can do so via a ``majority attack,'' which involves generating a separate blockchain of records called a ``fork.''

Therefore, at the heart of blockchain consensus mechanisms is an explicit cost (whether Proof of Work or Stake) for participation in the network and the opportunity to add blocks to the blockchain. In return, the participants are compensated with a mining reward, in the form of the block reward and transaction fees, when they succeed in adding a block. \cite{nakamoto2008bitcoin} argued that these incentives provided by the protocol would be enough to make even the majority miner behave honestly:

\begin{quote}
The incentive may help encourage nodes to stay honest. If a greedy attacker is able to
assemble more CPU power than all the honest nodes, he would have to choose between using it
to defraud people by stealing back his payments, or using it to generate new coins. He ought to
find it more profitable to play by the rules, such rules that favour him with more new coins than
everyone else combined, than to undermine the system and the validity of his own wealth.
(\cite{nakamoto2008bitcoin}, p.4)
\end{quote}

\noindent Since then, researchers (e.g., \cite{budish2022economic}, \cite{chiu2022economics}) have shown that the protocol may not be enough to assure security. Specifically, \cite{budish2022economic} shows that the cost of the majority attack may be zero. This is because an attacker would continue to earn mining rewards associated with confirming new blocks to what will eventually become the consensus chain. Such payments will mitigate the net cost of an attack. If the cost of mining is the same for the attackers as for the honest miners, the block rewards will cover all the mining costs (otherwise, honest miners would not participate), leading to zero net cost of an attack. That means that the majority miner has an incentive to double-spend whenever the value gained in the double-spend is positive. 

Indeed, we see some double-spending in Proof of Work blockchains (e.g., \cite{shanaev2019cryptocurrency}, \cite{moroz2020double}, \cite{ramos2021great}). But why don't we see more of these behaviors? Since blockchains operate on peer-to-peer networks with only local information, some attacks may succeed without being detected. However, external forces are probably contributing to the honest behavior of miners. These forces may be societally driven, like the price of cryptocurrency dropping in response to detected double spending, or individual, like waiting for six confirmations before considering a transaction executed. These type of external mechanisms in the context of Proof of Work was formalized by \cite{biais2019blockchain} and \cite{budish2022economic}, capturing the potential reduction in the value of cryptocurrencies to reduce the value the attacker can realize from the attack. \cite{saleh2021blockchain} and \cite{bakos2021tradeoffs} investigate a similar mechanism for Proof of Stake blockchains. This literature makes it clear that understanding where the protocol falls short internally points to the external forces needed to secure the blockchains.

Therefore, here we generalize and extend the analysis of the majority attack, focusing on internal mechanisms that mediate attack costs. We show that the net cost of an attack may be not only zero but {\em negative}. With negative cost, the attack is attractive even if there is no double spending -- this is because the attacker benefits from higher mining rewards during the attack. Such an effect is possible for a majority attack mounted by a miner active on the blockchain before the attack, even if it does not control the majority on that blockchain. We obtain that result by properly accounting for opportunity costs and transaction fees --- part of the mining reward omitted in the previous literature.\footnote{\cite{huberman2021monopoly} do examine the determination of fees but only under the assumption that no attacks are in progress.} 

As the attacker withdraws their mining power from the honest chain, the majority attack will succeed with less attacking power than applied to the blockchain before the attack. Thus, the capacity to process transactions will be reduced. This reduced capacity means that the fees per block processed by the attacker would rise; intuitively, they can pick and choose higher fee transactions to fill reduced processing capacity during the attack. This means the attacker would expect to earn higher payouts for confirming transactions during the attack than otherwise. We demonstrate that the net cost of an attack will be negative.

Such a negative cost result highlights a new type of majority attack. Previous results typically focused on double-spend attacks, where with zero cost of an attack, the majority would have the incentive to double-spend any positive amount. With a negative cost of attack, it is profitable for the majority miner to ``re-organize" the blockchain even with no benefit from double-spending. This is because the attacker captures additional mining rewards.

As mentioned earlier, blockchain security must rely on socially driven forces external to the protocol. By the nature of peer-to-peer networks, detecting misbehavior on blockchains is challenging. But this new type of majority attacks may be even harder to monitor -- it may be conducted in smaller spurts so that even honest nodes may not detect it. Another consequence of this new type of majority attack is that acquiring and commanding the majority of mining power turns out to be more profitable than what we previously believed, which may be the force leading to the concentration of mining. Moreover, these non-double-spending attacks make small miners unprofitable, as their blocks get orphaned. As those miners exit, concentration further increases, which goes against the spirit and premise of decentralized distributed blockchain systems.

While our main analysis focuses on Proof of Work, we show that the forces leading to the negative cost of attack are general to Nakamoto consensus (based on the longest chain rule) and also apply to Proof of Work mechanism. Overall, our result shows that careful examination of the economic vulnerabilities of blockchain protocols can point us to new areas needing societal scrutiny if this new technology is to fulfil its promise.

The paper proceeds as follows. The next section sets up the model by looking at the outcome in the absence of a majority attack on a permissionless Proof of Work blockchain such as Bitcoin. Section 3 describes the attack leading to the paper's main contribution, where the costs of a majority attack are examined in close detail. There we highlight the role of transaction fees. Section 5 then considers extensions. Section 6 then applies the same type of analysis to a class of Proof of Stake protocols, while a final section concludes.

\section{Benchmark Outcome Without an Attack}

We start by examining Proof of Work protocols. Under these protocols, miners operate nodes that devote resources (computation plus electricity) as part of a game to determine which will propose the next block on a blockchain.\footnote{Proof of Stake is examined in the appendix. For readers unfamiliar with these consensus mechanisms, see the over-views by, for example, \cite{halaburda2022microeconomics}, \cite{halaburda2022beyond} and \cite{gans2023consensus}.} Following \cite{ma2018market} and \cite{biais2019blockchain}, let $h_i$ denote the resource allocation by miner $i$ to a blockchain, where $i \in \mathcal{M}$, the set of potential miners. The cost of providing $h_i$ is $c_i(h_i)$, a non-decreasing, (weakly) quasi-convex function. Let $H$ denote the total hash power in any given period, i.e., $H=\sum_{i\in\mathcal{M}} h_i$, where we drop time subscripts for convenience. In this environment, if the time it takes miner $i$ to solve the required computational puzzle is $y_i$, then this time is a random variable with an exponential distribution with parameter $\frac{h_i}{D}$ where $D$ parameterizes the difficulty of the computational puzzle. The first miner to solve the puzzle, which happens at time $Y=\min\{y_1,y_2,...\}$, confirms a block. By the properties of exponential distributions, $Y$ also follows an exponential distribution --- with parameter $\frac{1}{D}H$. Difficulty ($D$) is adjusted periodically (in Bitcoin every two weeks) so that, on average, a block is mined every $\tau$ periods. That is, $\tau = \frac{1}{H/D}$ or $D=\tau\,H$. Initially, we assume that all agents take $D$ as fixed.\footnote{Another way of expressing this outcome is that, if a node contributes resources, then there is a probability, $p_i$, they will be selected as a leader to propose, and perhaps confirm, a block of transactions at a given point in time. \cite{leshno2020bitcoin} show that when the selection probability, $p_i$, is equal to $\frac{h_i}{H}$, then this satisfies properties such as anonymity, Sybil resistance and zero returns to merging. That same selection probability is what is explicitly used in Proof of Stake protocols and the implied probability in Proof of Work protocols.} 

A miner who solves the puzzle first and proposes a valid block receives a reward of newly minted tokens with the expected value of $R$ (stated in units of fiat currency, meaning that they are in the same units as mining costs). In addition, the miner receives transaction fees for transactions included in the block with an expected (fiat currency) value of $\Phi$.

All of this implies that miner $i$ expects to mine $\frac{h_i}{\tau\, H}$ blocks per unit of clock time. Thus, $i$'s expected continuation profits are:
$$\frac{h_i}{\tau\, H}(R+\Phi)-c_i(h_i)\, .$$
Let $h^*_i$ be the hash power that maximizes these expected profits. Then the \textit{participation constraint} for miner $i$ is
\begin{equation}\label{participation}
    \frac{h^*_i}{\tau\,H}(R+\Phi) \ge c_i(h^*_i) \, .
\end{equation}
Thus, $\mathcal{M}' \equiv \{i \in \mathcal{M}|\frac{h^*_i}{\tau\, H}(R+\Phi) \ge c_i(h^*_i)\}$ is the set of active miners in a period for given $\tau$, $R$ and $\Phi$. Note that 
while the cardinality of $\mathcal{M}'$ is weakly smaller than of $\mathcal{M}'$, but with all miners optimizing in the benchmark blockchain without attack,  $\sum_{i\in\mathcal{M}'}h^*_i=\sum_{i\in\mathcal{M}}h^*_i=H$.

While a miner can secretly add the next block to its local record, it will be able to spend its $R+ \Phi$ only when the system, including other miners, reaches a consensus that this block is included in the blockchain. The procedure by which an honest miner would do this would be to broadcast the block along with their winning answer to the computational contest. Other nodes would then confirm the validity of the block and add extend their own records accordingly. 
In this way, the new block is made public, and this also triggers the conditions for the next computational contest. That said, it is possible that there may be lags in the broadcast of new blocks, and so some miners will be unaware that the contest has been won. In this situation, it is possible that they could also find a solution to the contest and propose a new block. Thus, there might be two distinct chains that were forked at the previous point of consensus. To resolve this situation, \cite{nakamoto2008bitcoin} proposed a convention -- called the longest chain rule (or Nakamoto consensus) -- that instructed nodes when there is a fork to consider the blockchain with the most confirmed blocks (i.e., ``the greatest proof-of-work effort invested in it,'' \cite{nakamoto2008bitcoin}, p.3) as the correct one to extend.\footnote{\cite{biais2019blockchain} demonstrated that it is, in fact, an equilibrium for honest nodes to adhere to this convention and extend the longest chain.}

\section{Description of an Attack}

The longest chain rule convention can be exploited to implement a majority attack. The convention calls for miners who win the computational contest and propose blocks to broadcast the new block publicly, extending the current consensus chain. Miners are not compelled to do this and could instead keep the block secret and, in the process, at least for a time, be the only entity working to solve the next computational puzzle. Doing so, however, risks that other miners will solve both the previous and new computational puzzle quicker and hence, create a longer chain. Only an attacker who controls a majority of the hash power can be reasonably assured that by mining a chain privately they could end up with the longest chain at some point. They could then broadcast that chain and have it established as the new consensus. In the process, they may confirm blocks during the attack that are not the blocks that would be proposed by honest nodes -- ``re-writing'' what those observing the public chain consider a final record. 

An attack may utilize {\em inside} hash power, which was already used in the blockchain before the attack, and {\em outside} hash power deployed in addition to what was used before the attack.  \cite{budish2022economic} focuses on fully outside majority attacks. In a fully outside attack, holding~$D$ as constant for the duration of the attack, attacker $A$ adds more hash power ($h_A$) to the network than the aggregate hash power on the blockchain before the attack, i.e., $h_A>H$. In a fully inside majority attack, the attacker is someone who can coordinate or acquire control of $h_A$ incumbent nodes, which constitutes a majority of the incumbent hash power, i.e. $h_A > \frac{1}{2}H$. More generally, an attack can comprise a mixture of inside and outside hash power. That is, $h_A$ must exceed $\underline{h}_A$, the minimum hash power for a successful attack where $\underline{h}_A \equiv \max\{\frac{1}{2}H,H-h^*_A\}$. By~\eqref{participation}, $h^*_A$ is the hash power that is chosen by the attacker prior to the attack. For some cases below, it will be convenient to denote $\alpha$ as the share of hash power procured internally (i.e., $\alpha = \frac{h^*_A}{h_A}$). Then a majority attack requires $h_A > H-\alpha h_A$.

As noted above, a majority attack is an attempt to control and re-write transactions on the blockchains. We follow \cite{budish2022economic} in assuming that the (net presented discounted expected) private benefits to an attacker are $V_{attack}$. In specific applications, $V_{attack}$ can be derived from further analysis to calculate the returns to double-spending or the censorship of other users. 

The attack involves the majority miner creating a fork of the blockchain that they keep private. We assume, in our baseline case, that difficulty does not adjust during the attack.\footnote{We explore difficulty adjustment in Section~\ref{sec:difficulty}.} On the attacking chain, all the blocks are mined by the attacker; the expected number of its blocks per unit of time is $\frac{h_A}{D}=\frac{h_A}{\tau\,H}$. On the honest chain during the attack, the expected number of blocks per unit of time is $\frac{H - \alpha h_A}{\tau\, H}<\frac{h_A}{\tau\,H}$. The attacking chain is mining blocks faster on average than the honest one. Therefore, the attack will succeed with probability~1, and $\mathcal{L}$ denotes the expected duration of the attack.\footnote{\cite{budish2022economic}, examining an outside attack, provides a calculation for the expected number of blocks to be mined before the attacker has the longest chain. Here we want $\mathcal{L}$ to be the clock time, not the number of blocks but also, as will be shown, the precise level of $\mathcal{L}$ does not matter for the results below. Note also that $\mathcal{L}$  depends on $h_A$ and $H-\alpha h_A$, but we omit this consideration because this does not impact on the analysis below. } Let the expected costs of mounting an attack of length $\mathcal{L}$ by $A$ be $C_{attack}(\mathcal{L},A)$. (In what follows, as they play no explicit role, we drop the $\mathcal{L}$ and $A$ notation and refer simply to $C_{attack}$.) Then a majority attack is not worthwhile for $A$ if:
\begin{equation}\label{IC}
\Big(\frac{h^*_A}{\tau\,H}(R+\Phi) - c_A(h^*_A)\Big)\mathcal{L} \ge V_{attack} - C_{attack}\, .
\end{equation}
This is the generalized version of Budish's \textit{incentive compatibility} constraint.

\section{Costs of a Majority Attack}

We are now able to characterise the expected costs of mounting an attack, $C_{attack}$. If the attacker, $A$, supplies hash power of $h_A$ to the attacking branch, the costs of doing so are $c_A(h_A)$. As these are applied for the duration of the attack, the direct cost component of $C_{attack}$ is $c_A(h_A)\mathcal{L}$. 

This, however, is not a full description of $C_{attack}$ as $A$ will earn block rewards and transaction fees for blocks confirmed to the attack chain. Let $\tilde{R}$ and $\tilde{\Phi}$ be those expected benefits for each block mined during an attack. As $A$ is the only miner on the attacking branch, their expected number of blocks per unit of time is $\frac{h_A}{D}$. Thus, the second component of $C_{attack}$ is $-\frac{h_A}{D}(\tilde{R}+\tilde{\Phi})\mathcal{L}$. 

Putting these together, 
$$C_{attack}(h_A) = c_A(h_A)\mathcal{L} - \frac{h_A}{D}(\tilde{R}+\tilde{\Phi})\mathcal{L}\, .$$

\noindent However, during the attack, $A$ may choose a different level of hash power to minimize the cost of the attack. Let 
$\tilde{h}_A \equiv \arg\min_{h_A} \{ c_A({h}_A) - \frac{{h}_A}{D}(\tilde{R}+\tilde{\Phi}) \}$. To assure the success of the attack, however, $A$ needs to deploy at least $\underline{h}_A$. This implies that 
$$C_{attack} = \Big(c_A(\max\{\tilde{h}_A, \underline{h}_A\}) - \frac{\max\{\tilde{h}_A, \underline{h}_A\}}{D}(\tilde{R}+\tilde{\Phi})\Big)\mathcal{L}\, .$$

Then the incentive compatibility constraint~\eqref{IC} becomes
\begin{equation*}
\underbrace{\Big(\frac{h^*_A}{D}(R+\Phi) - c_A(h^*_A)\Big)\mathcal{L}}_{\text{opportunity cost of an attack}}\, \ge \, V_{attack} - \underbrace{\Big(c_A(\max\{\tilde{h}_A, \underline{h}_A\}) - \frac{\max\{\tilde{h}_A, \underline{h}_A\}}{D}(\tilde{R}+\tilde{\Phi})\Big)\mathcal{L}}_{\text{cost of mounting an attack}} \, .
\end{equation*}
On the left-hand side of this inequality are the non-attack returns that are forgone by $A$ should they attack. Re-arranging, this becomes:
\begin{equation}\label{IC2}
\underbrace{\Bigg(\frac{h^*_A}{D}(R+\Phi) - c_A(h^*_A) - \Big(\frac{\max\{\tilde{h}_A, \underline{h}_A\}}{D}(\tilde{R}+\tilde{\Phi})-c_A(\max\{\tilde{h}_A, \underline{h}_A\})\Big)\Bigg)\mathcal{L}}_{\text{net cost of an attack}} \ge V_{attack} \, .
\end{equation}
In their paper, Nakamoto (2008) stipulated that the cost of an attack is positive and increases with the length of the attack. Similarly, \cite{chiu2022economics} show that transactions with large enough value are vulnerable to double-spending.  \cite{budish2022economic} demonstrated that under some conditions, the cost of attack is zero, meaning that any transaction is vulnerable to an attack. 

For a negative net cost of attack, it is worth attacking even when $V_{attack}=0$, i.e., there is nothing to double spend. This may lead to attacks that are more difficult to detect.
Our generalized analysis shows that the key to signing the net cost of an attack is the difference between $A$'s expected profits if they do and do not attack, which we explore in further discussion.

\color{black}

\subsection{Linear mining costs}\label{sec:linear}

To build intuition, consider the case that \cite{budish2022economic} focuses on, where $c_i(h_i) = c\,h_i$ for any miner incumbent on the blockchain before the attack, $i \in \mathcal{M}'$, and for external hash power $c_i(h_i) = \kappa c\,h_i$ where $c > 0$ and $\kappa \ge 1$. 
As \cite{budish2022economic} notes, with linear and symmetric costs, the participation constraint holds for each miner with equality, giving rise to a free entry condition for all $i \in \mathcal{M}'$:
\begin{equation}\label{linear}
\frac{h_i}{\tau\,H}(R+\Phi) = c\,h_i \, . 
\end{equation}
Aggregating over $i \in \mathcal{M}'$, this yields
\begin{equation}\label{free.entry}
  R+ \Phi = c\,\tau\,H  \, .
\end{equation}

With linear costs, $\tilde{h}_A$ is not uniquely identified. Thus, we consider the cost of the attack for $\underline{h}_A$, the minimal required hash power for a successful attack of expected duration $\mathcal{L}$. The total direct costs of this attack are $c(\alpha + \kappa(1-\alpha))\underline{h}_A\mathcal{L}$. The indirect cost component is $\frac{\underline{h}_A}{\tau\,H}(\tilde{R}+\tilde{\Phi})\mathcal{L}$.
Due to the free entry condition~\eqref{linear}, the opportunity cost of attack, in this case, is zero, and the net cost of attack is equivalent to the cost of mounting the attack. 
Altogether, the incentive compatibility constraint~\eqref{IC2} becomes:
$$\underline{h}_A\Big(0 - \frac{1}{\tau\,H}(\tilde{R}+\tilde{\Phi}) + c(\alpha + \kappa(1-\alpha))\Big) \mathcal{L} \ge V_{attack}\, .$$
Further, using the aggregate free entry constraint~\eqref{free.entry}, this can be re-written as
$$\frac{\underline{h}_A}{\tau\,H}\Big( - (\tilde{R}+\tilde{\Phi}) + (R+\Phi)(\alpha + \kappa(1-\alpha))\Big) \mathcal{L} \ge V_{attack}\, ,$$
and further as 
\begin{equation}\label{linear.IC}
    \frac{\underline{h}_A}{\tau\,H}\Big( R+\Phi- (\tilde{R}+\tilde{\Phi}) + (R+\Phi)(\kappa-1)(1-\alpha)\Big) \mathcal{L} \ge V_{attack}\, .
\end{equation}
Note that the net cost of an attack on left-hand side of the inequality is decreasing in $\alpha$, and increasing in $\kappa$ and $R+\Phi- (\tilde{R}+\tilde{\Phi})$. Specifically, when $\tilde{R}+\tilde{\Phi}=R+\Phi$, the cost of an attack is zero when $\kappa=1$ or $\alpha=1$; that is when the cost of outside hashing power is the same as incumbent mining cost or if the attack is a fully inside attack. In such a case, the incentive compatibility constraint is violated and the blockchain is not secure as long as $V_{attack}>0$. It is secure, however, for $V_{attack}=0$. 

For other values of $\kappa$ and $\alpha$, the cost of attack is strictly positive. And that means that the incentive compatibility constraint holds for $V_{attack}=0$ and some positive values of $V_{attack}$. The blockchain is similarly secured for $V_{attack}=0$ and some positive values of $V_{attack}$, whenever $\tilde{R}+\tilde{\Phi}<R+\Phi$. This is because for such $\tilde{R}+\tilde{\Phi}$, the net cost of attack is positive for any values of $\kappa$ and $\alpha$, including $\kappa=1$ or $\alpha=1$. This is because such lower mining rewards on the attacking chain do not cover the cost of mining on that chain. Of course, if $V_{attack}$, it may justify bearing the positive net cost of the attack.

Interestingly, when $\tilde{R}+\tilde{\Phi}>R+\Phi$, then there exist values of  $\kappa$ and $\alpha$ such that the net cost of an attack is {\em negative}, e.g., $\kappa=1$ or $\alpha=1$ or values sufficiently close to it. In such a case, the incentive compatibility constraint is violated and the blockchain is not secure even when $V_{attack}=0$. That means that the attackers may rewrite the record on the blockchain not to double-spend, but to collect higher mining rewards.

This analysis provides some generalization over \cite{budish2022economic} who implicitly assumes that $\tilde{R}+\tilde{\Phi} = R+\Phi$ and $\alpha=0$. Note that, in such a case, 
the incentive compatibility constraint would become
 $\frac{h_A}{\tau\,H}(\kappa - 1)(R+\Phi)\mathcal{L} \ge V_{attack}\, .$
And as in \cite{budish2022economic}, the left-hand side (i.e., the cost of attack) will be zero if $\kappa = 1$. 

\subsection{General mining costs}

When there are non-linear and heterogeneous mining costs, $c_A(h_A)$ represents the total costs~$A$ has to expend per unit of time in order to apply $h_A$ in hash power to the network (whether attacking or not).  Similarly to the linear cost case, the incentive compatibility constraint~\eqref{IC2} is affected by the sign of $R+\Phi-(\tilde{R}+\tilde{\Phi})$. When $\tilde{R}+\tilde{\Phi}=R+\Phi$, the opportunity cost of attack is the same as $c_A(\tilde{h}_A)-\frac{\tilde{h}_A}{D}(\tilde{R}+\tilde{\Phi})\mathcal{L}$. Thus, the net cost of an attack is zero if the $\underline{h}_A$ constraint is not binding, i.e., when $\tilde{h}_A\geq \underline{h}_A$. If it is binding, then $A$ needs to deploy additional and more costly hashing power to assure the success of the attack. As a result, the cost of mounting the attack is larger than the opportunity cost and the net cost of the attack is strictly positive.\footnote{See Lemma~\ref{lem:equal.mining.rewards} in the Appendix for formal treatment of this result.}

When $\tilde{R}+\tilde{\Phi}<R+\Phi$, the net cost of the attack is always strictly positive. This is because even if $\underline{h}_A$ constraint is not binding, lower mining rewards on the attacking chain increase the cost of mounting of the attack, $C_{attack}$, above the opportunity cost.\footnote{See Lemma~\ref{lem:smaller.mining.rewards} in the Appendix for formal treatment of this result.} 

When the mining rewards are higher on the attacking blockchain, $\tilde{R}+\tilde{\Phi}>R+\Phi$ the cost of attack could be positive, zero or {\em negative}. Specifically, whenever the $\underline{h}_A$ constraint is not binding, the net cost of an attack is strictly negative. This is because higher mining rewards on the attacking chain decrease the cost of mounting an attack below the opportunity cost. The net cost of an attack may be strictly negative even in some cases when the $\underline{h}_A$ constraint is binding. A negative cost of attack means that the incentive compatibility constraint is violated --- and the security of the blockchain may be compromised --- even with $V_{attack}=0$.

If $\underline{h}_A\le h^*_A$, a \textit{purely internal attack} is feasible. Thus, the $\underline{h}_A$ constraint is not binding, and the cost of attack is negative.  Interestingly, even in this case,  outside hash power is applied to the attack, $\tilde{h}_A > h^*_A$. This is because higher mining rewards with a non-linear mining cost function lead to the use of more mining power, even though it is more expensive.

As $\underline{h}_A>\tilde{h}_A$ requires the deployment of hash power beyond cost minimizing point, the cost of mounting an attack increases. It is still lower than the opportunity cost until $\underline{h}_A$ reaches the break-even threshold, $\bar{\bar{h}}_A \equiv \{h_A\!>\!h^*_A\,|\,\frac{h^*_A}{\tau\,H}(R+\Phi) - c_A(h^*_A) = \frac{h_A}{\tau\,H}(\tilde{R}+\tilde{\Phi}) - c_A(h_A)\}$.  Beyond this point, $A$'s benefit of mounting the attack will be lower than their earnings had they not undertaken the attack. In this case, an attack involves a positive \textit{net} cost for $A$. For this reason, the incentive compatibility constraint can only be violated if $V_{attack} > 0$. Figure \ref{fig:attackprofits} provides a visual depiction of the hash power chosen by $A$ before and during an attack. 

This analysis allows us to establish Lemma~\ref{lem:conditional.attack}.

\begin{lemma}\label{lem:conditional.attack}
    When $\tilde{R}+\tilde{\Phi}>R+\Phi$, then there exists $0<\hat{h}<\frac{1}{2}H$ such that if $h^*_A\geq \hat{h}$, then the net cost of majority attack for $A$ is negative.
\end{lemma}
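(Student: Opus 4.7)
The plan is to split on the form of $\underline{h}_A = \max\{\tfrac{1}{2}H,\,H - h^*_A\}$ and then use a continuity argument around $h^*_A = \tfrac{1}{2}H$. For the regime $h^*_A \geq \tfrac{1}{2}H$, the constraint reduces to $\underline{h}_A = \tfrac{1}{2}H \leq h^*_A$, so the attacker can deploy at least $h^*_A$ in the attack; this yields attack profits of at least $\tfrac{h^*_A}{\tau H}(\tilde R + \tilde\Phi) - c_A(h^*_A)$, which strictly exceeds the opportunity profit $\tfrac{h^*_A}{\tau H}(R+\Phi) - c_A(h^*_A)$ by the hypothesis $\tilde R + \tilde\Phi > R + \Phi$. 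So the net cost is negative throughout this regime. The substantive content of the lemma is the regime $h^*_A < \tfrac{1}{2}H$, where the attacker is strictly constrained by $\underline{h}_A = H - h^*_A > h^*_A$.

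In that regime I would view the net cost as
\begin{equation*}
N(h^*_A) \;=\; \pi_h(h^*_A) \;-\; \max_{h \,\geq\, H - h^*_A} \pi_a(h),
\end{equation*}
with $\pi_h(h) := \tfrac{h}{\tau H}(R+\Phi) - c_A(h)$ and $\pi_a(h) := \tfrac{h}{\tau H}(\tilde R+\tilde\Phi) - c_A(h)$. Evaluating at the boundary value $h^*_A = \tfrac{1}{2}H$, the feasible attack set contains $h = \tfrac{1}{2}H$, so the attack profit is at least $\pi_a(\tfrac{1}{2}H)$, giving $N(\tfrac{1}{2}H) \leq \pi_h(\tfrac{1}{2}H) - \pi_a(\tfrac{1}{2}H) = \tfrac{1}{2\tau}\bigl[(R+\Phi) - (\tilde R + \tilde\Phi)\bigr] < 0$.

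To obtain the lemma, I would extend this strict inequality to a left-neighborhood of $\tfrac{1}{2}H$ by continuity of $N$. The honest profit $\pi_h$ is continuous in $h^*_A$ under the standing continuity of $c_A$. The attack-profit term depends on $h^*_A$ only through the lower endpoint $H - h^*_A$ of the constraint set; since the unconstrained maximizer $\tilde h_A$ does not depend on $h^*_A$, the constrained maximum equals $\pi_a(\tilde h_A)$ when $\tilde h_A \geq H - h^*_A$ and $\pi_a(H - h^*_A)$ otherwise, with the two pieces agreeing at the transition. Hence the attack profit, and thus $N$, is continuous. Combined with $N(\tfrac{1}{2}H) < 0$, this yields some $\hat h \in (0, \tfrac{1}{2}H)$ such that $N(h^*_A) < 0$ for all $h^*_A \in [\hat h, \tfrac{1}{2}H]$, which together with the first regime establishes the claim. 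The main obstacle is verifying the continuity of the constrained maximum across the binding/non-binding transition: under the standing continuity and quasi-convexity of $c_A$ this is mild, but it needs to be argued explicitly since the two pieces of the attack-profit formula have different functional forms.
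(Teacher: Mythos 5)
Your argument is correct in substance but takes a genuinely different route from the paper's. The paper constructs $\hat{h}$ explicitly as the solution of $\hat{h}+\bar{\bar{h}}_A(\hat{h})=H$, where $\bar{\bar{h}}_A$ is the break-even hash power at which the attack flow profit falls back to the forgone no-attack flow profit; the two ingredients are $\bar{\bar{h}}_A(h^*_A)>h^*_A$ --- which is exactly your boundary observation that $\frac{h}{\tau H}(\tilde{R}+\tilde{\Phi})-c_A(h)$ strictly exceeds $\frac{h}{\tau H}(R+\Phi)-c_A(h)$ at $h=h^*_A$ --- and monotonicity of $\bar{\bar{h}}_A$, which together give $\underline{h}_A(h^*_A)=H-h^*_A<\bar{\bar{h}}_A(h^*_A)$ for all $h^*_A>\hat{h}$. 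You instead evaluate the net cost where the constraint just ceases to bind ($h^*_A=\tfrac{1}{2}H$) and push the strict inequality to a left-neighborhood by continuity of the value function. Your explicit handling of the regime $h^*_A\ge\tfrac{1}{2}H$ is a small improvement (the paper writes $\underline{h}_A=H-h^*_A$ throughout, which is only valid for $h^*_A\le\tfrac{1}{2}H$), and your worry about the binding/non-binding transition is harmless: the value of a maximization over $[L,\infty)$ is continuous in $L$ for a continuous objective regardless of single-peakedness. What you lose is the explicit threshold: the paper's $\hat{h}\equiv\{h\,|\,h+\bar{\bar{h}}(h)=H\}$ is reused verbatim in the proof of the Proposition that follows, whereas your $\hat{h}$ is only shown to exist.

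One caveat applies to both proofs but is worth flagging because your continuity step leans on it more heavily: the net cost is treated as a function of $h^*_A$ alone, yet $h^*_A$ is endogenous --- it is the maximizer of $\frac{h}{\tau H}(R+\Phi)-c_A(h)$, so for a fixed $c_A$ it cannot vary, and varying it means varying the attacker's cost function. Continuity of $N$ across such a family (like the paper's unproved claim that $\bar{\bar{h}}_A(h^*_A)$ is increasing in $h^*_A$) presumes an implicit parameterization of attackers that neither proof states. If you want to avoid this, recast the step for a fixed attacker: since the attack profit at $h^*_A$ strictly exceeds the no-attack profit and the constraint only forces deployment out to $H-h^*_A$, the net cost is negative exactly when $H-h^*_A$ lies below the break-even point $\bar{\bar{h}}_A$, i.e., when $h^*_A>H-\bar{\bar{h}}_A$ --- which is precisely the paper's construction.
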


\begin{proof}
    First, note that $\underline{h}_A $ and $\bar{\bar{h}}_A$ are functions of the mining power~$A$ deploys on the incumbent blockchain, $h^*_A$. Moreover, for $h^*_A>0$ and a non-linear, non-decreasing $c_A(h_A)$, $\bar{\bar{h}}_A(h^*_A) > h^*_A$, and    given $\tilde{R}+\tilde{\Phi}>R+\Phi$. Further, $\bar{\bar{h}}_A(h^*_A)$ is increasing in $h^*$.

    Let $\hat{h}$ be characterized by $\hat{h}+\bar{\bar{h}}_A(\hat{h})=H$. Then $H>2\hat{h}$. Moreover, for any $h^*_A>\hat{h}$, $\bar{\bar{h}}_A(h^*_A)>\bar{\bar{h}}_A(\hat{h})=H-\hat{h}>H-h^*_A=\underline{h}_A(h^*_A)$. Therefore, the net cost of the majority attack for $A$ is negative.
\end{proof}\\

\noindent Lemma \ref{lem:conditional.attack} shows that if the mining rewards increase on the attacking chain, not only it is possible to conduct a majority attack at a negative net cost, but it is possible for a miner who does not have majority of the hash power in the incumbent blockchain.

The lemma hinges on the condition that the mining rewards on the attacking chain are higher than on the benchmark blockchain. Below we further establish that each of the cases $\tilde{R}+\tilde{\Phi}\lesseqqgtr R+\Phi$ can occur, depending on whether the attack is purely outside or partially inside attack, and congestion of the network.
\color{black}

\begin{figure}[t]
    \centering

\resizebox{\textwidth}{!}{

\tikzset{every picture/.style={line width=0.75pt}} 

\begin{tikzpicture}[x=0.75pt,y=0.75pt,yscale=-1,xscale=1]

\draw    (113,259) -- (616,259.5) ;
\draw [shift={(618,259.5)}, rotate = 180.06] [color={rgb, 255:red, 0; green, 0; blue, 0 }  ][line width=0.75]    (10.93,-3.29) .. controls (6.95,-1.4) and (3.31,-0.3) .. (0,0) .. controls (3.31,0.3) and (6.95,1.4) .. (10.93,3.29)   ; 
\draw    (113,259) -- (113,30.5) ;
\draw [shift={(113,28.5)}, rotate = 90] [color={rgb, 255:red, 0; green, 0; blue, 0 }  ][line width=0.75]    (10.93,-3.29) .. controls (6.95,-1.4) and (3.31,-0.3) .. (0,0) .. controls (3.31,0.3) and (6.95,1.4) .. (10.93,3.29)   ;
\draw  [dash pattern={on 0.84pt off 2.51pt}]  (300,126) -- (300,259.5) ;
\draw  [dash pattern={on 0.84pt off 2.51pt}]  (335,96) -- (335,259.5) ; 
\draw [line width=1.5]    (114,259.5) .. controls (225,202.5) and (263,75) .. (358,100) .. controls (453,125) and (553,255) .. (568,293) ;
\draw  [dash pattern={on 0.84pt off 2.51pt}]  (114,125) -- (407,125) ;
\draw  [dash pattern={on 0.84pt off 2.51pt}]  (411,125) -- (411,260) ;
\draw [line width=0.75]    (461,123) -- (438.51,142.68) ;
\draw [shift={(437,144)}, rotate = 318.81] [color={rgb, 255:red, 0; green, 0; blue, 0 }  ][line width=0.75]    (10.93,-3.29) .. controls (6.95,-1.4) and (3.31,-0.3) .. (0,0) .. controls (3.31,0.3) and (6.95,1.4) .. (10.93,3.29)   ;
\draw    (513,176) -- (463.57,214.77) ;
\draw [shift={(462,216)}, rotate = 321.89] [color={rgb, 255:red, 0; green, 0; blue, 0 }  ][line width=0.75]    (10.93,-3.29) .. controls (6.95,-1.4) and (3.31,-0.3) .. (0,0) .. controls (3.31,0.3) and (6.95,1.4) .. (10.93,3.29)   ;
\draw    (112,389) -- (410,389) ;
\draw [shift={(410,389)}, rotate = 180] [color={rgb, 255:red, 0; green, 0; blue, 0 }  ][line width=0.75]    (0,5.59) -- (0,-5.59)(10.93,-3.29) .. controls (6.95,-1.4) and (3.31,-0.3) .. (0,0) .. controls (3.31,0.3) and (6.95,1.4) .. (10.93,3.29)   ;
\draw [shift={(112,389)}, rotate = 0] [color={rgb, 255:red, 0; green, 0; blue, 0 }  ][line width=0.75]    (0,5.59) -- (0,-5.59)(10.93,-3.29) .. controls (6.95,-1.4) and (3.31,-0.3) .. (0,0) .. controls (3.31,0.3) and (6.95,1.4) .. (10.93,3.29)   ;
\draw    (410,389) -- (615.5,390) ;
\draw [shift={(615.5,390)}, rotate = 180.28] [color={rgb, 255:red, 0; green, 0; blue, 0 }  ][line width=0.75]    (0,5.59) -- (0,-5.59)(10.93,-3.29) .. controls (6.95,-1.4) and (3.31,-0.3) .. (0,0) .. controls (3.31,0.3) and (6.95,1.4) .. (10.93,3.29)   ;
\draw [shift={(410,389)}, rotate = 0.28] [color={rgb, 255:red, 0; green, 0; blue, 0 }  ][line width=0.75]    (0,5.59) -- (0,-5.59)(10.93,-3.29) .. controls (6.95,-1.4) and (3.31,-0.3) .. (0,0) .. controls (3.31,0.3) and (6.95,1.4) .. (10.93,3.29)   ;
\draw [line width=0.75]    (114,259.5) .. controls (236,213) and (235,127) .. (300,126) .. controls (365,125) and (520,261) .. (535,299) ;
\draw    (112,319) -- (295,319) ;
\draw [shift={(295,319)}, rotate = 180] [color={rgb, 255:red, 0; green, 0; blue, 0 }  ][line width=0.75]    (0,5.59) -- (0,-5.59)(10.93,-3.29) .. controls (6.95,-1.4) and (3.31,-0.3) .. (0,0) .. controls (3.31,0.3) and (6.95,1.4) .. (10.93,3.29)   ;
\draw [shift={(112,319)}, rotate = 0] [color={rgb, 255:red, 0; green, 0; blue, 0 }  ][line width=0.75]    (0,5.59) -- (0,-5.59)(10.93,-3.29) .. controls (6.95,-1.4) and (3.31,-0.3) .. (0,0) .. controls (3.31,0.3) and (6.95,1.4) .. (10.93,3.29)   ;
\draw  [draw opacity=0][fill={rgb, 255:red, 255; green, 255; blue, 255 }  ,fill opacity=1 ] (153,299) -- (257,299) -- (257,339) -- (153,339) -- cycle ;
\draw  [draw opacity=0][fill={rgb, 255:red, 255; green, 255; blue, 255 }  ,fill opacity=1 ] (300,292) -- (339,292) -- (339,345) -- (300,345) -- cycle ;
\draw    (295,319) -- (613.5,317) ;
\draw [shift={(613.5,317)}, rotate = 179.64] [color={rgb, 255:red, 0; green, 0; blue, 0 }  ][line width=0.75]    (0,5.59) -- (0,-5.59)(10.93,-3.29) .. controls (6.95,-1.4) and (3.31,-0.3) .. (0,0) .. controls (3.31,0.3) and (6.95,1.4) .. (10.93,3.29)   ;
\draw [shift={(295,319)}, rotate = 359.64] [color={rgb, 255:red, 0; green, 0; blue, 0 }  ][line width=0.75]    (0,5.59) -- (0,-5.59)(10.93,-3.29) .. controls (6.95,-1.4) and (3.31,-0.3) .. (0,0) .. controls (3.31,0.3) and (6.95,1.4) .. (10.93,3.29)   ;
\draw  [draw opacity=0][fill={rgb, 255:red, 255; green, 255; blue, 255 }  ,fill opacity=1 ] (346,301) -- (504.5,301) -- (504.5,341) -- (346,341) -- cycle ;

\draw (608,263.9) node [anchor=north west][inner sep=0.75pt]    {$h_{A}$};
\draw (292,263.9) node [anchor=north west][inner sep=0.75pt]    {$h_{A}^{*}$};
\draw (328.5,260.65) node [anchor=north west][inner sep=0.75pt]    {$\tilde{h}_{A}$};
\draw (61,307.4) node [anchor=north west][inner sep=0.75pt]    {$\underline{h}_{A}$};
\draw (155,302) node [anchor=north west][inner sep=0.75pt]   [align=left] {\begin{minipage}[lt]{67.93pt}\setlength\topsep{0pt}
\begin{center}
pure inside\\attack feasible
\end{center}

\end{minipage}};
\draw (512,157.4) node [anchor=north west][inner sep=0.75pt]    {$R+\Phi $};
\draw (461,98.4) node [anchor=north west][inner sep=0.75pt]    {$\tilde{R}+\tilde{\Phi }$};
\draw (184,393) node [anchor=north west][inner sep=0.75pt]   [align=left] {Attack even if $\displaystyle V_{attack} =0$};
\draw (423,393) node [anchor=north west][inner sep=0.75pt]   [align=left] {Attack requires $\displaystyle V_{attack}  >0$};
\draw (47,38) node [anchor=north west][inner sep=0.75pt]   [align=left] {\begin{minipage}[lt]{40.14pt}\setlength\topsep{0pt}
\begin{center}
Attacker\\Flow\\Profits
\end{center}

\end{minipage}};
\draw (406,260.4) node [anchor=north west][inner sep=0.75pt]    {$\bar{\bar{h}}_{A}$};
\draw (368,301) node [anchor=north west][inner sep=0.75pt]   [align=left] {\begin{minipage}[lt]{83.23pt}\setlength\topsep{0pt}
\begin{center}
outside resources\\applied in attack
\end{center}
\end{minipage}};

\end{tikzpicture}
}
\caption{\textbf{Attacker Profits and Costs when $\tilde{R}+\tilde{\Phi}>R+\Phi$}}
\label{fig:attackprofits}  
\end{figure}
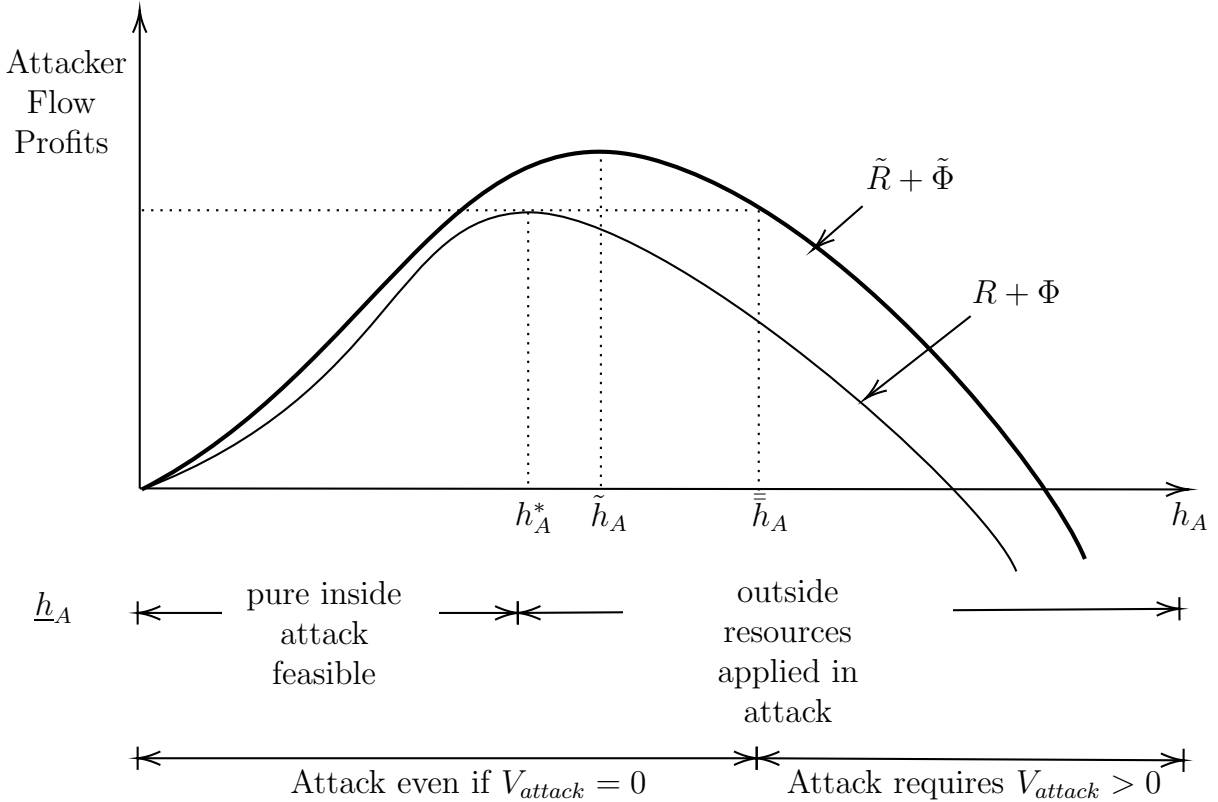

\subsection{Transaction fees collected during an attack}

The above analysis shows that a key driver of the costs of a majority attack, and thus security of the blockchain, are the mining rewards, i.e., block rewards and transaction fees, during an attack (i.e., $\tilde{R}$ and $\tilde{\Phi}$). Note that, with Bitcoin, the block reward rarely changes and is fixed by the protocol. As mentioned earlier, our focus here is on internal mechanisms that mediate the cost of attacks, and therefore in our analysis we assume that the exchange rate, as an external societally-mediated factor, is not affected by the attack. In result, $\tilde{R}$ does not differ from $R$. For this reason, we focus here instead on what happens to transaction fees.

Every transaction, $t$, is characterized by amount of space it takes, $\sigma_t>0$ and transaction fee {\em per unit of space} offered by a user, $\varphi_t\geq 0$. Transactions arrive randomly and independently, and on average, the sum of $\sigma_t$ for all transactions arriving in a unit of time is $\sigma$. 

Blocks have a fixed capacity of $b$ that is set by the protocol. For Bitcoin, the block size is limited to 1MB, which can house over 2,000 transactions. If blocks, on average, are not at capacity, i.e., $\tau\, \sigma \le b$, then users gain no advantage in terms of the speed of transaction processing and so will not find it optimal to offer a fee above zero. In this case, $\Phi = 0$. Thus, if the blocks are not filled both during the attack and without an attack, then $\tilde{\Phi}=\Phi=0$, and then $\tilde{R}+\tilde{\Phi}=R+\Phi$.

On the other hand, if the demand for blockchain transactions is higher than the supply, i.e., $\tau\, \sigma>b$, then only transactions paying the highest fee per unit of space get included in the block. In this case, to determine $\Phi$, order all transactions available at a point in time (e.g., transactions that arrived during the last $\tau$ units of time since the last block) by $\varphi$ in such a way that $\varphi_1\geq \varphi_2 \geq ... $ with some of these inequalities strict. To avoid the knapsack problem, it is assumed that all transactions have the same size of 1, i.e., $\sigma_t=1$ for all $t$. Then, $\sigma$ is simply the average number of transactions per unit of time, and a block has the capacity to hold $b$ transactions. Then the fees collected with this block are $\Phi=\sum_{t=1}^{b}\varphi_t$. This is equivalent to picking the $b$ highest fees from  $\sigma$ transactions arriving.

Given this, we can prove the following.
\begin{lemma}\label{lem:higher.fees}
    Suppose that users do not alter their transaction fee submissions during an attack and that $h_A \in (\frac{1}{2}H,H)$. If $\tau\, \sigma > b$, $\tilde{\Phi} > \Phi$.
\end{lemma}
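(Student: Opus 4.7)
The plan is to translate the reduced block-arrival rate on the attacking chain into a larger pool of candidate transactions per block, and then invoke monotonicity of the top-$b$ order statistics in the sample size.

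First, I would compute the expected inter-block time on the attacking chain. Since only $A$ mines there with hash power $h_A < H$, blocks arrive at rate $h_A/D = h_A/(\tau H)$, so the mean interval between successive blocks on that chain is
\[
\tau' \;=\; \frac{\tau H}{h_A} \;>\; \tau,
\]
with $\tau' < 2\tau$ because $h_A > H/2$. On the honest benchmark chain, blocks arrive every $\tau$ units of time in expectation. Under the assumption that users do not alter their fee submissions, fresh transactions continue to arrive as an independent stream at rate $\sigma$ per unit of time, with per-unit-space fees drawn from a distribution that is identical across the two regimes.

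Second, I would describe the candidate pool available to the miner at each block. With $\sigma_t=1$ and the protocol capacity $b$, the miner applies the greedy rule of picking the $b$ transactions with the highest $\varphi_t$ from whatever is unconfirmed. Between two consecutive blocks on the attacking chain, the expected number of fresh arrivals is $\sigma\tau'$, strictly greater than the $\sigma\tau$ arrivals between two honest-chain blocks. The hypothesis $\tau\sigma > b$ (hence a fortiori $\tau'\sigma > b$) guarantees that in both regimes the capacity binds, so $\Phi$ and $\tilde{\Phi}$ are the expected sums of the top $b$ of $\sigma\tau$ and $\sigma\tau'$ i.i.d.\ draws from the fee distribution, respectively.

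Third, I would appeal to a monotonicity argument. If $\varphi_{(1)} \ge \varphi_{(2)} \ge \cdots$ denote the ordered fees in an i.i.d.\ sample, then $\sum_{t=1}^{b}\varphi_{(t)}$ is stochastically nondecreasing in the sample size, since enlarging the sample can only displace an existing top-$b$ element by a weakly larger one. Because $\sigma\tau > b$, the $b$-th order statistic in the honest-chain pool lies strictly below the essential supremum of the fee distribution with positive probability, so additional arrivals strictly raise the expected sum. Taking expectations yields $\tilde{\Phi} > \Phi$. The main subtlety to handle carefully is the mempool accounting — transactions unconfirmed at the start of the attack, plus those the honest chain includes while the attacker's fork is private, remain available to the attacker — but this overlap only enlarges the attacker's effective candidate pool and therefore reinforces the same inequality, so the core combinatorial comparison above suffices.
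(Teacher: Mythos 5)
Your proposal is correct and follows essentially the same route as the paper's proof: slower block arrival on the attacking chain ($\tau H/h_A > \tau$) enlarges the pool of candidate transactions per block, and monotonicity of the top-$b$ order statistics in the sample size, plus a positive probability that an extra arrival beats the $b$-th highest fee, gives the strict inequality. The only shared caveat (which the paper also flags parenthetically) is that strictness requires the fee distribution not to be a single point mass, which is implicit in your ``essential supremum'' step.
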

\begin{proof}
    Note that unless $h_A \ge H$, the total number of blocks confirmed per period on the attacking chain is strictly less than the number of blocks confirmed per period when there is no attack. As users do not alter their transaction fee submissions, the attacking chain has a strictly lower total block capacity per period. Thus, there exist transactions that would have been confirmed without the attack but are not confirmed on the attacking chain. As the attacker chooses transactions for each block to maximize total fees, this implies that total fees on the attacking chain per block exceed those that would otherwise arise per block on the benchmark blockchain without the attack. 
    
    More formally, the expected time for a block to be added on the attacking blockchain is $\tau\frac{H}{h_A}>\tau$. Every block still has a capacity of $b$, but more transactions arrive between block confirmations, $\tau\frac{H}{h_A} \sigma>\tau \sigma$. So now, the attacker is choosing the $b$ highest $\varphi$'s from a larger set. Their sum is expected to be larger with the larger set, by order statistics. (Unless all transactions have the same fee per space, which may happen when the fee is 0.) 
    
    And specifically, if the support of $\varphi$'s is larger than a single point, there is a strictly positive probability that among the additional $(\frac{H}{h_A}-1)\tau \sigma$ transactions arriving because of the delay, there is a transaction with $\tilde{\varphi}>\varphi_{b}$. That would increase fees collected under attack, $\sum_{t=1}^{b-1} \varphi_t + \tilde{\varphi}>\sum_{t=1}^b \varphi_t$. Therefore, the expected average fees on the attacking chain are larger than without an attack, $\tilde{\Phi}>\Phi$.
\end{proof}
\\

\noindent The simple intuition behind this result is that, under the attack, fewer blocks are added to the blockchain per unit of time, and therefore the capacity of the blockchain (i.e., supply of processed transactions) is lower, while the demand for processing transactions remains the same. When mining a block under attack, $A$ has the same transactions available as they would have without an attack and some new arrivals, which may offer higher fees. (These ``new arrivals" could have been picked up by other miners without the attack.)

It is useful to note that, for the same reason, fees on the honest branch during the attack also increase. However, miners do not collect these fees as the honest branch becomes orphaned when the attacking branch eventually becomes the longest chain.

Given this, we can demonstrate the following:

\begin{proposition}
There exists $0<\hat{h}<\frac{1}{2}H$, such that if for some miner $i \in \mathcal{M}'$,  $h^*_i> \hat{h}$, then an equilibrium without a majority attack does not exist even if $V_{attack} = 0$.
\end{proposition}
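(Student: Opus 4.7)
The plan is to obtain the proposition as a direct consequence of chaining Lemma~\ref{lem:higher.fees} with Lemma~\ref{lem:conditional.attack}, and then translating ``negative net cost of attack'' into ``violation of the incentive compatibility constraint~\eqref{IC2} at $V_{attack}=0$.''

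First, I would fix the congestion assumption (implicit from the preceding subsection): the blockchain is congested in the benchmark, so $\tau\sigma>b$. I would also recall that, by the discussion preceding Lemma~\ref{lem:higher.fees}, the exchange rate is held fixed under the attack so $\tilde{R}=R$. Then, for any candidate attacker $i\in\mathcal{M}'$ who would mount an attack with $h_A\in(\tfrac12 H,H)$, Lemma~\ref{lem:higher.fees} applies and gives $\tilde{\Phi}>\Phi$; combined with $\tilde{R}=R$ this yields the strict inequality $\tilde{R}+\tilde{\Phi}>R+\Phi$ that is the hypothesis of Lemma~\ref{lem:conditional.attack}.

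Next, I would invoke Lemma~\ref{lem:conditional.attack} to produce a threshold $\hat h\in(0,\tfrac12 H)$ such that whenever $h^*_i\ge \hat h$, the net cost of the majority attack for $i$ is strictly negative. Plugging this into the incentive compatibility constraint~\eqref{IC2} with $V_{attack}=0$ gives
\begin{equation*}
\underbrace{\text{(net cost of attack)}}_{<\,0}\;\ge\;V_{attack}=0,
\end{equation*}
which is violated. Hence $i$ strictly prefers to deviate from ``no attack'' and mount a majority attack, so no equilibrium in which all miners refrain from attacking can exist.

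The only subtlety I anticipate is making sure the two lemmas compose cleanly: Lemma~\ref{lem:higher.fees} is stated for $h_A\in(\tfrac12 H,H)$, whereas Lemma~\ref{lem:conditional.attack} constructs the attacker's optimal deployed hash power $\bar{\bar h}_A$ endogenously. I would therefore briefly argue that the $\hat h$ guaranteed by Lemma~\ref{lem:conditional.attack} automatically places the required $\underline{h}_A$ in the interval where Lemma~\ref{lem:higher.fees} applies—i.e.\ the minimal successful attack uses strictly less than the full network hash power but more than a majority—so that the strict inequality $\tilde{R}+\tilde{\Phi}>R+\Phi$ indeed holds for the attack under consideration. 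This loop is closed because $\hat h<\tfrac12 H$ implies $\underline{h}_A=H-h^*_i<H$, and by construction $h^*_i$ alone does not suffice (the attack uses some outside hash power), keeping $h_A<H$. With this consistency noted, the conclusion follows immediately, with no further calculation required.
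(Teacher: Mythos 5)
Your proposal is correct and follows essentially the same route as the paper's own proof: note that $\underline{h}_A(h^*_i)<H$ so the attack can be mounted with $h_A<H$, apply Lemma~\ref{lem:higher.fees} to get $\tilde{R}+\tilde{\Phi}>R+\Phi$, and then invoke Lemma~\ref{lem:conditional.attack} to conclude the net cost is negative and the incentive compatibility constraint fails at $V_{attack}=0$. Your added remarks on the congestion condition $\tau\sigma>b$ and on keeping the deployed hash power inside the interval where Lemma~\ref{lem:higher.fees} applies are sensible clarifications of points the paper leaves implicit, not a different argument.
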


\begin{proof}
For any $h^*_i>0, \underline{h}_A(h^*_i)<H$. And therefore, $i$ can conduct a successful attack with $h_A<H$. Then by Lemma~\ref{lem:higher.fees}, $\tilde{R}+\tilde{\Phi}>R+\Phi$ on the attacking chain. 

Now, if also $h^*_i>\hat{h}$, where $\hat{h}\equiv \{h\,|\,h+\bar{\bar{h}}(h)=H\}$, then by Lemma~\ref{lem:conditional.attack}, $\hat{h}<\frac{1}{2}H$ and the net cost of attack for~$i$ is negative. Thus, even when $V_{attack}=0$, the incentive compatibility constraint for $i$ is violated, and there is no equilibrium without a majority attack.
\end{proof}
\\

\noindent If for every active miner $i \in \mathcal{M}'$, $h^*_i<\hat{h}$, then the incentive compatibility constraint holds for all active miners. In this case, a fully or partially inside majority attack could only proceed if $V_{attack} > 0$. 
However, in Section~\ref{sec:renting}, we show that if it is possible to rent inside hash power, it is not necessary for a miner to have more than $\hat{h}$ of incumbent power -- any miner can successfully attack at a negative cost.

\subsection{Costs of an outside attack}

The above analysis demonstrates that the scenario of a purely outside attack is somewhat of a special case in understanding the security of the blockchain. For that reason, it is useful to consider this case more carefully. In so doing, we focus on the linear cost case.

Consider a purely outside attack where $h^*_A=0$ and $h_A>H$. Since more blocks will be produced per unit of time, and the transactions arrive at the same rate, by the same logic as in Lemma~\ref{lem:higher.fees}, for fully outside attacks $\tilde{\Phi}<\Phi$. Thus, the mining rewards are strictly lower on the attacking chain, and by earlier analysis, the net cost of attack is strictly positive.

Let $m=\frac{h_A}{H}>1$. Then, the expected direct cost of mining per unit of time is $c h_A=c\, m\,  H$, where, by the free entry condition, $c=\frac{R+\Phi}{\tau H}$. So the expected cost of mining during the attack is $c h_A=\frac{m}{\tau}(R+\Phi)$. The mitigating mining rewards during the attack bring, in expectation, $\frac{m}{\tau}(R + \tilde{\Phi})$. So the net cost of the attack per unit of time is $\frac{m}{\tau}(\Phi-\tilde{\Phi})$. The total expected cost of the attack is: 
$$\frac{\mathcal{L}(m)}{\tau} m (\Phi-\tilde{\Phi}) > 0\, .$$
The left-hand side is positive since for $h_A>H, \tilde{\Phi}>\Phi$.

It would seem that with an attack from outside, the attacker's cost increases with the attack's length. However, the attacker can always limit the number of blocks mined in the attack to one more than the number of blocks mined on the honest blockchain. That is, the moment that $A$ mines one more block than the honest blockchain, the attack ends.\footnote{Recall there is no escrow period as that is a convention outside of Proof of Work protocols. If there were an escrow rule, this would not change the logic. Suppose an attacker is forced to wait until the honest blockchain produces at least $w$ blocks before revealing their longer chain. If $A$ mined continuously during that time, they would mine in expectation $m\, w$ blocks, possibly more than $w+1$. However, the attacker does not need to mine more than $w+1$. Thus, they can shut down mining after reaching $w+1$, and wait until the honest block reaches $w$ to reveal their longer chain. This will allow them to save on the mining cost that is not necessary for the success of the attack.} Thus, $A$ produces only one more block than the honest blockchain no matter how long the attack lasts. Therefore, more properly, the expected number of blocks $A$ mines in the attack is $\frac{\mathcal{L}}{\tau}+1$. The expected mining cost (with shutting down if necessary) is $\left(\frac{\mathcal{L}}{\tau}+1\right)(R +\Phi)$. The mitigating mining rewards are $\left(\frac{\mathcal{L}}{\tau}+1\right)(R +\tilde{\Phi})$. 

Note that with $\left(\frac{\mathcal{L}}{\tau}+1\right) \tilde{\Phi}$, $A$ processes all the transactions that would go into the honest blocks during $\frac{\mathcal{L}}{\tau}$, and they also create one additional block of ``second tier" transactions paying strictly less. That is, the total fees collected by the attacker throughout the attack are $\frac{\mathcal{L}}{\tau} \Phi+\tilde{\Phi}(\mathcal{L})$. Note also that $\tilde{\Phi}(\mathcal{L})$ is increasing in the length of the attack $\mathcal{L}$. This is because there are more second-tier transactions, and the highest paying $b$ transactions of second-tier transactions pay more. Therefore, the total expected cost of the majority attack from the outside is:
$$
\Phi-\tilde{\Phi}(\mathcal{L})>0\, .
$$
The net cost of the attack is still positive, but it decreases the longer the attack.

\subsection{Users adjusting their transaction fee bids}

Until now, we have assumed that the transactions (including fees offered) arrive in the same way during the attack as they would without an attack. But, in fact, the users may adjust the fees they offer based on the congestion they observe.

We assume that the attacker keeps their branch of the fork secret until ready to reveal the longest chain. Only the honest blockchain (first the benchmark blockchain and then the honest non-attacking branch) is visible to the users before the attack is executed, and users respond only to this. 

In the case of a fully outside attack, users see no change and do not adjust their bids. In the case of an inside attack, the time between the blocks on the honest chain increases, decreasing the supply of transactions recorded per unit of time. Therefore, users who adjust their bids, will adjust upward, increasing the difference, $\tilde{\Phi}>\Phi$. 

\subsection{Comparison with selfish mining}

It is useful to compare this result with the work by \cite{eyal2018majority}. That paper demonstrates that a non-majority miner controlling between one-third and one-half of total hash power can find it optimal to deviate from the honest strategy and mine a private chain. The difference here is that in the mechanism for a majority attack, the attacker controls more hash power than all other miners during the attack. 
In this regard, the two results are complementary, and they are driven by different forces. 

An incumbent miner engaging in selfish mining creates a secret chain in the hope of mining more blocks in a given time interval than the benchmark and thus collects additional block rewards, weighing it against the risk of being outrun by the remaining majority of miners. If the same miner engages in a partially inside majority attack instead, he mines fewer blocks than the benchmark, incurring additional mining costs of deploying outside hashing power but collecting higher transaction fees.
\color{black}

\section{Model Extensions}

\subsection{Renting inside hash power}\label{sec:renting}
As argued by \cite{bonneau2016buy}, it may be possible for the attacker to rent hash power from miners currently participating in the blockchain. To rent $h_{rent}$ of hash power currently in use, the attacker needs to compensate the owner, i.e., pay
$$
r(h_{rent}) \gtrsim\frac{h_{rent}}{D}(R+\Phi)-c_{rent}(h_{rent})\, .
$$ 
Then the cost of the attack per unit of time becomes
$$
c_A(h_A)-\frac{h_A}{D}(R+\tilde{\Phi})+c_{rent}(h_{rent})- \frac{h_{rent}}{D}(R+\tilde{\Phi}) +r(h_{rent})$$
$$=c_A(h_A)-\frac{h_A}{D}(R+\tilde{\Phi})+\frac{h_{rent}}{D}(\Phi-\tilde{\Phi})
$$
where $h_A$ is the power the attacker directly provides himself, and $h_{rent}$ is the power rented from the current miners.

Notice that renting hashing power that is currently in use affects $\underline{h}_A$. The minimal hashing power needed for a majority attack when renting internally is $\underline{h}^A(h_{rent})\gtrsim H-h^*_A-h_{rent}$. When~$i$ rents $h_{rent}>\frac{1}{2}H-h^*_A$, then $h^*_A+h_{rent}>\underline{h}_A(h_{rent})$. This means that the attacker can conduct a successful majority attack using no more than their original $h^*_A$.

With $\tilde{\Phi}>\Phi$, the possibility of renting internal hashing power makes the attacks less costly. 

\begin{proposition} With the possibility of renting internal hashing power, any miner can successfully attack at a negative cost of attack, i.e., the attack is beneficial even if $V_{attack}=0$.
\end{proposition}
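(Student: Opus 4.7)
The plan is to combine the simplified per-unit-time cost expression for the renting attack with Lemma~\ref{lem:higher.fees} to show the net cost is strictly negative for any potential attacker. The key observation is that renting serves two purposes at once: it delivers majority without requiring a large $h^*_A$, and it does so by re-directing existing hash power rather than adding outside power, so the total hash power on the attacking chain remains strictly below $H$, preserving the fee-congestion channel of Lemma~\ref{lem:higher.fees}.

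First I would fix the attack configuration. Given any $h^*_A \geq 0$, set the attacker's own contribution during the attack to $h_A = h^*_A$ (no outside power added) and pick $h_{rent}$ in the open interval $(\max\{0,\tfrac{H}{2}-h^*_A\},\, H-h^*_A)$. This interval is non-empty for any $h^*_A \in [0,H)$, so even a miner with $h^*_A=0$ can execute the strategy. The total attacking hash power is $h^*_A + h_{rent}$ and the remaining honest hash power is $H - h^*_A - h_{rent}$; the lower bound on $h_{rent}$ secures majority ($\underline{h}_A$ is met), while the upper bound keeps some honest mining alive. Hence $h^*_A + h_{rent} \in (\tfrac{H}{2}, H)$ and Lemma~\ref{lem:higher.fees} yields $\tilde{\Phi} > \Phi$.

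Next I would substitute $h_A=h^*_A$ into the per-unit-time cost expression for the renting attack derived earlier in the section,
$$c_A(h^*_A)-\frac{h^*_A}{D}(R+\tilde{\Phi})+\frac{h_{rent}}{D}(\Phi-\tilde{\Phi}),$$
and add the opportunity cost $\frac{h^*_A}{D}(R+\Phi) - c_A(h^*_A)$. The $c_A(h^*_A)$ terms cancel and the $R$ terms cancel, leaving a net cost per unit time of
$$\frac{h^*_A + h_{rent}}{D}(\Phi - \tilde{\Phi}),$$
which is strictly negative since $\tilde{\Phi} > \Phi$. Multiplying by $\mathcal{L}>0$ gives a strictly negative total net cost, so the incentive compatibility constraint~\eqref{IC2} is violated even when $V_{attack}=0$.

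The calculation is mechanical once the renting configuration is fixed, so there is no deep obstacle. The boundary case $h^*_A=0$ is worth a sentence: the attacker has no opportunity cost to recoup, but with $h_A=0$ the term $c_A(0)$ vanishes and the cost reduces to $\frac{h_{rent}}{D}(\Phi-\tilde{\Phi})$, which is still strictly negative. The more substantive thing to watch is that Lemma~\ref{lem:higher.fees} genuinely applies under renting, which is why the strict upper bound $h_{rent} < H-h^*_A$ matters: if the attacker were to rent all remaining hash power, the attacking chain would mine at the benchmark rate and the congestion mechanism behind $\tilde{\Phi}>\Phi$ would collapse.
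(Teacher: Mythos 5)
Your proof is correct and takes essentially the same route as the paper's: substitute $h_A=h^*_A$ into the per-unit-time renting cost, rent just enough to secure a majority, and invoke $\tilde{\Phi}>\Phi$ to sign the net cost. Your explicit upper bound $h_{rent}<H-h^*_A$ (keeping total attacking power in $(\tfrac{1}{2}H,H)$ so Lemma~\ref{lem:higher.fees} genuinely applies) and the exact cancellation yielding $\frac{h^*_A+h_{rent}}{D}(\Phi-\tilde{\Phi})\mathcal{L}<0$ are slightly more careful than the paper's chain of inequalities, but the argument is the same.
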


\begin{proof}
Acknowledging that $\tilde{R}=R$, when $\tilde{\Phi}>\Phi$, the per-unit-of-time cost of attack with renting $h_{rent}$ of hashing power already used in the blockchain is 
$$c_A(h_A)-\frac{h_A}{D}(R+\tilde{\Phi})+\underbrace{\frac{h_{rent}}{D}(\Phi-\tilde{\Phi})}_{<0}
$$
Since $A$ can rent $h_{rent}>\frac{1}{2}H-h^*_A$, then successfully attacking with $h^*_A$ of own hash power is an option. Optimizing own and rented hashing power to successfully attack at a minimal cost yields:
\begin{multline*}
C_{attack}\leq \\ \left(c_A(h^*_A)-\frac{h^*_A}{D}(R+\tilde{\Phi})+\frac{h_{rent}}{D}(\Phi-\tilde{\Phi})\right)\mathcal{L}<\left(c_A(h^*_A)-\frac{h^*_A}{D}(R+\tilde{\Phi})\right)\mathcal{L}\\<\left(c_A(h^*_A)-\frac{h^*_A}{D}(R+{\Phi})\right)\mathcal{L}
\end{multline*}
which means that the incentive compatibility constraint is violated, i.e., it is beneficial to attack even if $V_{attack}=0$.
\end{proof}

\subsection{Adjusting mining difficulty during an attack}\label{sec:difficulty}

The previous analysis held the blockchain's difficulty, $D$, fixed at $D = \tau H$ where $H$ is the previous aggregate hash power and $\tau$ is the targeted average time between block confirmations. However, as already noted, during an attack, when $\underline{h}_A < H$, the time between block confirmations (on both the honest and attacking chain) increases. If difficulty were to adjust during the attack, it would restore the timing of blocks back to $\tau$. Because the aggregate hash power on the honest and attack chain is different, the two chains would have different difficulties by the time the attack chain is made public. Here we explore how this changes the operation and incentives of an attack.

A simple conjecture might be that if difficulty adjusts immediately, something we can theorize about even though it cannot occur in practice, the two competing chains' difficulties would adjust and either one could be the longest chain and the attack chain, despite having more hash power, would have no advantage. However, the longest chain rule is not, in fact, purely a convention based on the longest chain. Instead, when two competing fork branches have different difficulties, the generalized longest chain rule becomes the {\em heaviest chain} rule. That is, the number of blocks is weighted by difficulty so that the branch with the most computational power behind it is chosen.\footnote {The difficulty of each fork is observable to all miners as it is captured in the level of the computational problem each faces. See e.g., https://learnmeabitcoin.com/technical/longest-chain}

Given that, we explore what happens when difficulty adjusts during an attack. In Bitcoin, difficulty adjusts every 2600 blocks. To adjust, the algorithm takes the average (reported) clock time that it has taken to mine these 2600 blocks, and if that number is different than 10 min, it adjusts the difficulty to such a number that would yield 10 min given the power used over the past 2600 blocks. Note that this adjustment is purely retrospective and does not account for the difference between more recent and old changes. Therefore, if twice as much hashing power was present over the last ten blocks, the algorithm won't assume this is the hashing power going forward. It will take the average over the 2600 blocks.

Suppose that the attack started $d$ blocks before the change of difficulty. On average, it took $\tau (2600-d)$ time to mine the blocks on the pre-attack blockchain. Then, the average time to mine a block on the attacking branch with power $h_A$ and difficulty $D=H\tau$ is $Y_A=\frac{1}{\theta_A}=\frac{D}{h_A}$. So at the time of the difficulty adjustment, the total average time it took to mine 2600 blocks on this branch is $\tau (2600-d)+d\tau \frac{H}{h_A}=\tau \big((2600-d)+d\frac{H}{h_A}\big)$. 

With that, the new difficulty is 
\begin{equation*}
D'_A=D\frac{2600}{(2600-d)+d\frac{H}{h_A}}\,.
\end{equation*}
Note that $D'_A<D$ when $h_A<H$ and $D'_A>D$ when $h_A>H$. Moreover,the new expected time between blocks is $$
Y'_A=\frac{D'_A}{h_A}=\tau \frac{H}{h_A}\frac{2600}{(2600-d)+d\frac{H}{h_A}}\, .
$$
When $d=2600$, then $Y'_A=\tau$. But when $d<2600$, then
$Y'_A>\tau$  when $h_A<H$ and $Y'_A<\tau$  when $h_A>H$.

Whenever $h_A\geq \underline{h}_A$, $A$'s attack eventually succeeds with probability 1. This is because, by the definition of $\underline{h}_A$, $h_A \geq \underline{h}_A$ will make the attacking branch heavier, even if not longer. Therefore, we can express the clock-counted expected length of the attack as $\mathcal{L}=\mathcal{L}_{preD}+\mathcal{L}_{postD}$, where $\mathcal{L}_{preD}$ is the expected length of the attack before the difficulty change ($\mathcal{L}_{preD}=d \tau$) and $\mathcal{L}_{postD}$ is the expected length of the attack after the difficulty change.

In such a case, the cost of the attack is
\begin{equation*}
C_{attack}(h_A)=\mathcal{L}_{preD}[c_A(h_A)-\frac{h_A}{D}(R+\tilde{\Phi})]+\mathcal{L}_{postD}[c_A(h_A)-\frac{h_A}{D'_A}(R+\tilde{\Phi}_{postD})]
\end{equation*}
where $D'_A$ is the difficulty on the attacking branch after adjustment, and $\tilde{\Phi}_{postD}$ are the expected fees per block on the attacking branch after the adjustment.

So, if $h^A_i<H$, then $D'_A<D$ and the attacking branch mines more blocks than before the adjustment change, $\frac{h_A}{D'_A}>\frac{h_A}{D}$. Thus, the attacker gets more block rewards per unit of time after the difficulty adjustment. But the expected fees are lower per block after the adjustment than before, $\tilde{\Phi}_{postD}<\tilde{\Phi}$. Nonetheless, they are still (weakly) higher than on the benchmark blockchain, $\tilde{\Phi}_{postD}\geq \Phi$, because $Y_A'\geq \tau$, i.e., the blocks are created more slowly. Thus, the capacity of the blockchain on the attacking chain is weakly smaller than on the benchmark blockchain. It is strictly smaller, and $Y_A'>\tau$ when $d<2600$. Altogether, $\frac{h_A}{D'_A}(R+\tilde{\Phi}_{postD})>\frac{h_A}{D_A}(R+\tilde{\Phi}_{postD})>\frac{h_A}{D_A}(R+\Phi)$. Therefore, the cost of attacking for $h^*_A>0$ is lower than the opportunity cost even with difficulty adjustment: 
$$\min_{h_A} C_{attack}(h_A<H)\leq C_{attack}(h^*_A)<\left(c_A(h^*_A)-\frac{h^*_A}{D}(R+\Phi)\right)\mathcal{L}\, ,$$
confirming our baseline result.

If $h_A>H$, i.e., the attack is purely outside, then $D'_A>D$, which means that the number of blocks the attacker mines after the difficulty change is lower than before the difficulty change, $\frac{h_A}{D'_A}<\frac{h_A}{D}$. Nonetheless, the new expected time between blocks is still lower than on the benchmark blockchain, $Y'_A<\tau$, which means $\tilde{\Phi}_{postD}<\Phi$. With that, 
$$C_{attack}(h_A>H)>\left(c_i(h_A)-\frac{h_A}{D}(R+\Phi)\right)\mathcal{L}\,.$$
The cost of attack is positive when $c_A(h_A)\geq \frac{h_A}{D}(R+\Phi)$ which holds by the free entry condition that implies that the attacker is weakly less efficient than the marginal miner.

\subsection{Miners adjusting their participation}

For our earlier results, we assumed that an attack was short-run and, thus, there was no entry or exit of honest miners. Accounting for changes in the participation constraint of honest miners impacts the results in an important way.

Many potential miners may be ready to join if the mining becomes marginally more profitable. At the outset, prior to an attack, the blockchain profit of a marginal miner is zero, and, thus, additional miners do not enter. In case of an outside attack, the observable honest branch does not differ from the benchmark blockchain, and hence there is no additional entry.

In the case of a non-purely outside attack, the honest branch of the fork slows down, increasing mining rewards as fees per block increase (whether adjusted or not). That will encourage new mining power to enter. 

If the potential entrant miners on the fringe have at least the same mining efficiency as the marginal miner in the benchmark blockchain, then the honest branch of the blockchain will grow up to $H$ in mining power. That changes the requirement on $\underline{h}_A$, and basically requires that $\underline{h}_A \gtrsim H$, making it a fully outside attack. 

If the entry of these new, equally efficient miners happens immediately, the cost of the attack is larger than the opportunity cost. This is because the cost of the attack is positive (because it's effectively a fully outside attack), and the opportunity cost is weakly negative (the miner had a non-negative payoff from mining on the original blockchain).

If the entry of these new, equally efficient miners is delayed, the cost of the attack is negative until the honest chain reaches $H$. This is because $A$ can conduct a majority attack with $h_A<H$. Once the honest chain reaches $H$, the attacker needs to deploy $h_A>H$, and the cost of the attack becomes positive. Whether the total cost of the attack is positive or negative depends on the length of the attack. Since the speed at which the new miners enter the honest chain is independent of $h_A$, it may be optimal to increase $h_A$ to increase the chance that the attack finishes before the honest branch reaches $H$ and the cost of the attack becomes positive.

If the newly entering miners on the honest chain are less efficient than those on the benchmark blockchain, the honest chain will not reach $H$ before the marginal miner breaks even. With less than $H$ on the honest branch, $A$ can conduct a successful majority attack with $h_A<H$, and thus with the cost of an attack less than the opportunity cost.

This highlights that attacks that take place beyond the short-run will involve additional entry effects that may raise longer-run attack costs beyond those expressed on our baseline result. However, while the discussion above illustrates some of those potential nuances, a full equilibrium analysis of the long-run model, complete with expectations regarding the success of the attack and post-attack behavior, is required. This extension is left for future work.

\section{Proof of Stake with Nakamoto Consensus}

We now turn to consider a version of Proof of Stake consensus (specifically, permissionless Proof of Stake blockchains with Nakamoto Consensus) and demonstrate that the same outcomes as those derived for a majority attack on Proof of Work blockchains also apply for those protocols.\footnote{There are two broad variants of Proof of Stake. The first one to emerge was based on Nakamoto consensus whereby, if there were forks, the chain that validators extended would be the one with the most blocks. This was the consensus mechanism of PeerCoin. This is not the most widely used variant today. That is based on Byzantine Fault Tolerance (BFT) and is not vulnerable to the majority attacks as analyzed in this paper. For more on the distinction between these approaches see \cite{gans2023consensus}. \cite{amoussou2019rationals}, \cite{auer2021ledgers} and \cite{halaburda2021economic} discuss attack vulnerabilities of BFT blockchains.}

\subsection{Equilibrium without an Attack}

Let $\mathcal{V}$ be the set of validators with individual validator $i \in \mathcal{V}$. The stake, in blockchain-native coins) of $i$ is $s_i$. Staking involves a cost associated with locking up capital in the network. Let $r(s_i)=r\, s_i e_s$ -- average cost of staking (locking up the capital) per unit of clock time, where $e_s$ is the (expected) exchange rate and $r$ could be interpreted as an interest rate.

The staking process and block proposer selection proceeds as follows:
\begin{itemize}
\item the (approximate) time between the blocks is set by the protocol to be $\tau_s$
\item the protocol calls validator $i$ to mint a block with probability proportional to their stake; each draw independent
\begin{itemize}
\item every $\tau_s$,  validator $i$ mints a block with probability $\frac{s_i}{S}$ 
\item within a time $K=k\, \tau_s$ validator~$i$ expects to mint $k\frac{s_i}{S}$ block, since each block's draw independent
\end{itemize}
\item if a validator is called but does not propose a block, that is, they are a no-show, the block missing, next node selected according to staking proportion for next $\tau_s$
\item if a validator is called and proposes a valid block they potentially receive newly minted tokens and transaction fees with expected values in fiat currency of $R_s$ and $\Phi_s$ respectively. (Transaction fees are determined in the same way as under Proof of Work.)
\end{itemize}
Thus, each validator can choose a stake level that gives them a higher probability of being the stake proposer. That is, if the total amount staked is $S=\sum_{i\in\mathcal{V}} s_i$, then the probability of being selected to propose a stake in any epoch is $\frac{s_i}{S}$. 

Given this, for a clock interval $K=k\, \tau_s$, each validator $i$ chooses $s_i$ to maximize expected profits:
$$k\frac{s_i}{S}(R_s+\Phi_s)-k\,\tau_s\, r\, s_i e_s$$
If $s^*_i$ is the maximized stake amount, then the participation constraint for each $i$ is:
$$\frac{1}{S}(R_s+\Phi_s) \ge \tau_s\,r\, e_s$$
Let $\mathcal{V}' \subseteq \mathcal{V}$ be the set of validators who satisfy this participation constraint. Given the linear costs that arise under Proof of Stake, free entry implies that the participation constraint will bind for all $i \in \mathcal{V}'$ so that:
$$\frac{1}{S}(R_s+\Phi_s) = \tau_s\,r\, e_s \implies R_s+\Phi_s = \tau_s\,r\, e_s\,S$$

\subsection{Attack by a Majority Staker}

As validators must be accepted into the protocol, the relevant mode of attack is an inside attack.\footnote{Previous analyses of Proof of Stake attacks assumed an outside attack; e.g., \cite{gans2021consensus}, \cite{halaburda2022microeconomics}.} We now describe that attack as depicted in Figure \ref{fig:PoSattack}.

\begin{figure}[t]
\centering
\resizebox{\textwidth}{!}{

\tikzset{every picture/.style={line width=0.75pt}} 

\begin{tikzpicture}[x=0.75pt,y=0.75pt,yscale=-1,xscale=1]

\draw  [fill={rgb, 255:red, 155; green, 155; blue, 155 }  ,fill opacity=0.6 ] (490.25,74) -- (534.75,74) -- (534.75,112) -- (490.25,112) -- cycle ;
\draw  [fill={rgb, 255:red, 155; green, 155; blue, 155 }  ,fill opacity=0.6 ] (61,73) -- (105.5,73) -- (105.5,111) -- (61,111) -- cycle ;
\draw    (296.5,191) -- (675,190.5) ;
\draw [shift={(678,190.5)}, rotate = 179.92] [fill={rgb, 255:red, 0; green, 0; blue, 0 }  ][line width=0.08]  [draw opacity=0] (8.93,-4.29) -- (0,0) -- (8.93,4.29) -- cycle    ;
\draw    (106.5,92.95) -- (124,92.95) ;
\draw    (122.5,191) -- (212.5,191) ;
\draw    (207.5,191) -- (296.5,191) ;
\draw [line width=1.5]  [dash pattern={on 1.69pt off 2.76pt}]  (83.5,150) -- (118,150) ;
\draw [shift={(122,150)}, rotate = 180] [fill={rgb, 255:red, 0; green, 0; blue, 0 }  ][line width=0.08]  [draw opacity=0] (11.61,-5.58) -- (0,0) -- (11.61,5.58) -- cycle    ;
\draw [line width=1.5]  [dash pattern={on 1.69pt off 2.76pt}]  (83.5,112.5) -- (83.5,150) ;
\draw    (537.5,23.5) .. controls (507.28,43.98) and (516.03,55.89) .. (536.87,67.6) ;
\draw [shift={(538.5,68.5)}, rotate = 208.61] [color={rgb, 255:red, 0; green, 0; blue, 0 }  ][line width=0.75]    (10.93,-3.29) .. controls (6.95,-1.4) and (3.31,-0.3) .. (0,0) .. controls (3.31,0.3) and (6.95,1.4) .. (10.93,3.29)   ;
\draw    (622,174.5) .. controls (598.84,192.84) and (577.08,191.61) .. (567.04,177.13) ;
\draw [shift={(566,175.5)}, rotate = 59.3] [color={rgb, 255:red, 0; green, 0; blue, 0 }  ][line width=0.75]    (10.93,-3.29) .. controls (6.95,-1.4) and (3.31,-0.3) .. (0,0) .. controls (3.31,0.3) and (6.95,1.4) .. (10.93,3.29)   ;
\draw  [fill={rgb, 255:red, 155; green, 155; blue, 155 }  ,fill opacity=0.6 ] (2,73) -- (46.5,73) -- (46.5,111) -- (2,111) -- cycle ;

\draw  [dash pattern={on 0.84pt off 2.51pt}]  (47.5,92.5) -- (60,92.5) ;
\draw  [fill={rgb, 255:red, 155; green, 155; blue, 155 }  ,fill opacity=0.6 ] (183,73) -- (227.5,73) -- (227.5,111) -- (183,111) -- cycle ;
\draw    (228.5,92.95) -- (246,92.95) ;
\draw  [fill={rgb, 255:red, 155; green, 155; blue, 155 }  ,fill opacity=0.6 ] (124,73) -- (168.5,73) -- (168.5,111) -- (124,111) -- cycle ;
\draw    (169.5,92.5) -- (182,92.5) ;
\draw    (107.5,149.95) -- (125,149.95) ;
\draw  [fill={rgb, 255:red, 155; green, 155; blue, 155 }  ,fill opacity=0 ] (184,130) -- (228.5,130) -- (228.5,168) -- (184,168) -- cycle ;
\draw    (229.5,149.95) -- (247,149.95) ;
\draw  [fill={rgb, 255:red, 155; green, 155; blue, 155 }  ,fill opacity=0.6 ] (125,130) -- (169.5,130) -- (169.5,168) -- (125,168) -- cycle ;
\draw    (170.5,149.5) -- (183,149.5) ;
\draw  [fill={rgb, 255:red, 155; green, 155; blue, 155 }  ,fill opacity=0.6 ] (305,74) -- (349.5,74) -- (349.5,112) -- (305,112) -- cycle ;
\draw    (348.5,93.95) -- (366,93.95) ;
\draw  [fill={rgb, 255:red, 155; green, 155; blue, 155 }  ,fill opacity=0 ] (246,74) -- (290.5,74) -- (290.5,112) -- (246,112) -- cycle ;
\draw    (291.5,93.5) -- (304,93.5) ;
\draw  [fill={rgb, 255:red, 155; green, 155; blue, 155 }  ,fill opacity=0.6 ] (305,130) -- (349.5,130) -- (349.5,168) -- (305,168) -- cycle ;
\draw    (350.5,149.95) -- (368,149.95) ;
\draw  [fill={rgb, 255:red, 155; green, 155; blue, 155 }  ,fill opacity=0.6 ] (246,130) -- (290.5,130) -- (290.5,168) -- (246,168) -- cycle ;
\draw    (291.5,149.5) -- (304,149.5) ;
\draw  [fill={rgb, 255:red, 155; green, 155; blue, 155 }  ,fill opacity=0 ] (427,74) -- (471.5,74) -- (471.5,112) -- (427,112) -- cycle ;
\draw    (472.5,93.95) -- (490,93.95) ;
\draw  [fill={rgb, 255:red, 155; green, 155; blue, 155 }  ,fill opacity=0 ] (368,74) -- (412.5,74) -- (412.5,112) -- (368,112) -- cycle ;
\draw    (413.5,93.5) -- (426,93.5) ;
\draw  [fill={rgb, 255:red, 155; green, 155; blue, 155 }  ,fill opacity=0 ] (549,75) -- (593.5,75) -- (593.5,113) -- (549,113) -- cycle ;
\draw    (592.5,94.95) -- (610,94.95) ;
\draw    (535.5,94.5) -- (548,94.5) ;
\draw  [fill={rgb, 255:red, 155; green, 155; blue, 155 }  ,fill opacity=0.6 ] (428,130) -- (472.5,130) -- (472.5,168) -- (428,168) -- cycle ;
\draw    (473.5,149.95) -- (491,149.95) ;
\draw  [fill={rgb, 255:red, 155; green, 155; blue, 155 }  ,fill opacity=0 ] (369,130) -- (413.5,130) -- (413.5,168) -- (369,168) -- cycle ;
\draw    (414.5,149.5) -- (427,149.5) ;
\draw  [fill={rgb, 255:red, 155; green, 155; blue, 155 }  ,fill opacity=0.6 ] (550,131) -- (594.5,131) -- (594.5,169) -- (550,169) -- cycle ;
\draw    (593.5,150.95) -- (611,150.95) ;
\draw  [fill={rgb, 255:red, 155; green, 155; blue, 155 }  ,fill opacity=0.6 ] (491,131) -- (535.5,131) -- (535.5,169) -- (491,169) -- cycle ;
\draw    (536.5,150.5) -- (549,150.5) ;

\draw (653,197) node [anchor=north west][inner sep=0.75pt]   [align=left] {time};
\draw (139,193.4) node [anchor=north west][inner sep=0.75pt]    {$t$};
\draw (190,193.35) node [anchor=north west][inner sep=0.75pt]    {$t+1$};
\draw (544,14) node [anchor=north west][inner sep=0.75pt]   [align=left] {Original chain};
\draw (625,165) node [anchor=north west][inner sep=0.75pt]   [align=left] {Fork};
\draw (14,83.4) node [anchor=north west][inner sep=0.75pt]    {$B_{0}$};
\draw (69,83.4) node [anchor=north west][inner sep=0.75pt]    {$B_{t-1}$};
\draw (136,83.4) node [anchor=north west][inner sep=0.75pt]    {$B_{t}$};
\draw (137,140.4) node [anchor=north west][inner sep=0.75pt]    {$B_{t}^{A}$};
\draw (637,142) node [anchor=north west][inner sep=0.75pt]   [align=left] {6 blocks};
\draw (635,87) node [anchor=north west][inner sep=0.75pt]   [align=left] {4 blocks};
\draw (254,194.4) node [anchor=north west][inner sep=0.75pt]    {$t+2$};
\draw (312.5,194.35) node [anchor=north west][inner sep=0.75pt]    {$t+3$};
\draw (376,193.35) node [anchor=north west][inner sep=0.75pt]    {$t+4$};
\draw (437.5,194.4) node [anchor=north west][inner sep=0.75pt]    {$t+5$};
\draw (498.5,194.35) node [anchor=north west][inner sep=0.75pt]    {$t+6$};
\draw (554.5,194.35) node [anchor=north west][inner sep=0.75pt]    {$t+7$};

\end{tikzpicture}
}
\caption{\textbf{Majority Attack under Proof of Stake}}
\label{fig:PoSattack}
\end{figure}
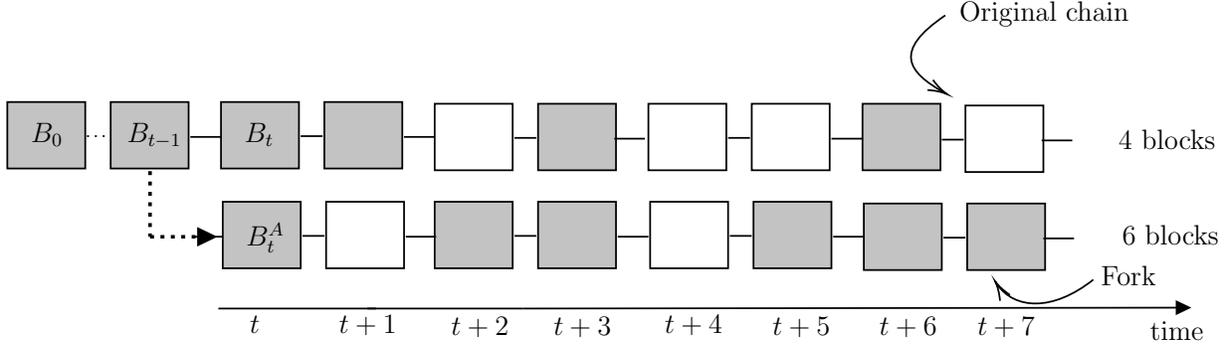

Suppose that an entity that controls majority of the stake, $s_A>\frac{1}{2} S$ creates a fork. 
Under PoS, a validator cannot attach a block to the blockchain unless they are ``called by the protocol'' to do it. Let $B_{t-1}$ be the last block before the fork. To create the fork, $A$ accepts $B_{t-1}$, but thereafter does not accept any block created by other validators than themself. If the protocol running on his local machine calls for a validator other than $A$, there is no response, and after $\tau_s$ the protocol calls another validator. Since the stake in the protocol has not changed, $A$ is called by the protocol with probability $\frac{s_A}{S}$ every interval $\tau_s$, and only these blocks are added to the attacking branch.

At the same time, $A$ withholds their blocks from the honest branch. Thus, during the attack, a block is added to the honest branch with probability $1-\frac{s_A}{S}$ every interval $\tau_s$. Note that with $s_A>\frac{1}{2}S$, the attack will succeed with probability 1 at some point. Let $\mathcal{L}=l\,\tau_s$ be the expected length of the attack in clock time. During $\mathcal{L}$, attacker $A$ adds in expectation $l\, \frac{s_A}{S}$ blocks, and these are all the blocks on the attacking blockchain. Without the attack, over $\mathcal{L}$, $A$ would also add $l\, \frac{s_A}{S}$ blocks in expectation. Thus, $A$ receives proposes the same number of blocks with and without the attack.

If $V_{attack}$ is the net present discounted expected payoff to $A$ from a successful attack, then the incentive compatibility constraint is:
$$k\frac{s_A}{S}(R_s+\Phi_s)-k\,\tau_s\, r\, s_A e_s \ge V_{attack} - C_{attack}$$
Note that, by free entry, the left-hand side of this inequality is $0$ so the constraint collapses to $C_{attack} \ge V_{attack}$. Here, $C_{attack}$ is:
$$C_{attack}= r\, s_A e_s \mathcal{L} - \frac{s_A}{S}(\tilde{R}_s+\tilde{\Phi}_s)$$
where $\tilde{R}_s$ and $\tilde{\Phi}_s$ are the block rewards and transaction fees earned during an attack.

Note that, as in Proof of Work, it is likely that $\tilde{R}_s = R_s$. In addition, using the free entry condition, $C_{attack}$ becomes $\frac{s_A}{\tau_s S}\big(\Phi_s - \tilde{\Phi}_s\big)\mathcal{L}$. Thus, the incentive compatibility condition becomes:
$$\frac{s_A}{\tau_S S}\big(\Phi_s - \tilde{\Phi}_s\big)\mathcal{L} \ge V_{attack}$$
By the same argument as for Proof of Work (that is, Lemma~\ref{lem:higher.fees}), it is easy to see that $\tilde{\Phi}_s > \Phi_s$. Thus, we have demonstrated the following result:
\begin{proposition}
    An equilibrium without a majority attack does not exist if there exists a majority validator even if $V_{attack} = 0$.
\end{proposition}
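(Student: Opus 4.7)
The plan is to verify the incentive compatibility condition at $V_{attack}=0$ and show it is strictly violated whenever a majority validator exists. First, I would invoke the free entry condition $R_s + \Phi_s = \tau_s r e_s S$ that follows from the linear staking cost and the participation constraint binding for each active validator in $\mathcal{V}'$. Using this, the non-attack flow profit of $A$ is zero, so the opportunity cost vanishes and the incentive compatibility condition collapses to $C_{attack} \geq V_{attack}$. It therefore suffices to show $C_{attack} < 0$ whenever $s_A > \tfrac{1}{2} S$.

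Second, I would rewrite $C_{attack} = r\, s_A e_s \mathcal{L} - \frac{s_A}{\tau_s S}(\tilde{R}_s+\tilde{\Phi}_s)\mathcal{L}$ using the free entry relation together with $\tilde{R}_s = R_s$ (since the block reward is set by protocol and the exchange rate is taken as an external/social factor, just as in the Proof of Work case). Substituting yields the compact expression $C_{attack} = \frac{s_A}{\tau_s S}(\Phi_s - \tilde{\Phi}_s)\mathcal{L}$. Since $\mathcal{L}>0$ and $s_A>0$, the sign of $C_{attack}$ is governed entirely by the sign of $\Phi_s - \tilde{\Phi}_s$.

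The key, and only nontrivial, step is to establish the Proof-of-Stake analog of Lemma~\ref{lem:higher.fees}, namely $\tilde{\Phi}_s > \Phi_s$. On the attacking branch only $A$ ever proposes blocks, and $A$ is called by the protocol with probability $s_A/S < 1$ in each slot of length $\tau_s$; hence the expected clock time between blocks on the attacking branch is $\tau_s\cdot S/s_A > \tau_s$, while transactions continue to arrive at the same rate as in the benchmark blockchain (because users only observe the honest chain before $A$ reveals the fork, so they do not adjust their fee bids). The attacker then fills each block by selecting from a strictly larger pool of pending transactions than would be available without the attack, and by the order-statistic argument underlying Lemma~\ref{lem:higher.fees}, the expected fee per block is strictly higher on the attacking branch whenever the congestion condition on fees (analogous to $\tau\sigma > b$) holds.

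Combining these pieces, $C_{attack} < 0 \leq V_{attack}$, so the incentive compatibility condition fails strictly even with nothing to double-spend, and no equilibrium without a majority attack can exist. The main obstacle is really the order-statistics step: one must verify that the mechanism reducing the supply of block space on the attacking branch under Proof of Stake (scarcer validator calls) plays exactly the role that the slower puzzle-solving rate did in Proof of Work, so that the Lemma~\ref{lem:higher.fees} argument ports over unchanged. Everything else is a direct application of the free entry condition.
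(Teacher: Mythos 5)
Your proposal is correct and follows essentially the same route as the paper: free entry makes the opportunity cost zero so the incentive constraint collapses to $C_{attack}\ge V_{attack}$, the free-entry substitution with $\tilde{R}_s=R_s$ reduces $C_{attack}$ to $\frac{s_A}{\tau_s S}(\Phi_s-\tilde{\Phi}_s)\mathcal{L}$, and the sign is pinned down by porting Lemma~\ref{lem:higher.fees} to the slower block arrival rate $\tau_s S/s_A>\tau_s$ on the attacking branch. Your explicit flagging of the congestion condition (the analog of $\tau\sigma>b$) is a useful precision the paper leaves implicit, but it is not a different argument.
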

Note that this is a stronger result than under Proof of Work due to the symmetric and linear costs associated with being a validator, and fully inside attack.

\section{Conclusion}

In designing the Bitcoin network, \cite{nakamoto2008bitcoin} concluded that "[a]ny needed rules and incentives can be enforced with this [Proof of Work] consensus mechanism." While it was thought that an entity that controlled the majority of miners could attack the blockchain, the assumption was that this would be costly. This paper has demonstrated that this contention does not hold. Using within-protocol mechanisms only but for the special case of a purely external attacker, an attack can be carried out with negative costs. Thus, what regulates the security of the network is purely external. 

There are many candidates for external regulation, but these are often contingent on the precise motives for an attack. For example, a double-spend attack by which an attacker censors past transactions in order to re-spend tokens may require an agreed-upon escrow period (something \cite{nakamoto2008bitcoin} suggested). However, even here, a negative cost attack could be carried out indefinitely. In other cases where the attacker remains invested in the network post-attack (say, directly or indirectly because of the ownership of specialized processors), an anticipated fall in the value of crypto-tokens could deter an attack. Again, how this would operate is external to the protocol and, at present, the evidence that it is a regulating device is mixed.\footnote{See \cite{kwon2019bitcoin}, \cite{ramos2021great} and \cite{shanaev2019cryptocurrency}.} This points to future research focussing on the nature of $V_{attack}$ and developing an understanding of the external institutional and social mechanisms that may reduce or eliminate it.

\appendix 
\section*{Appendix}

\begin{lemma}\label{lem:equal.mining.rewards}
    For non-linear costs, if $\tilde{R}+\tilde{\Phi}=R+\Phi$, the net cost of an attack is zero when $\tilde{h}_A\geq \underline{h}_A$, and strictly positive otherwise.
\end{lemma}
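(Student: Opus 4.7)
The plan is to exploit the observation that, when $\tilde{R}+\tilde{\Phi}=R+\Phi$ and $D=\tau H$, the two optimization problems
$$h^*_A=\arg\max_{h}\Big\{\tfrac{h}{\tau H}(R+\Phi)-c_A(h)\Big\},\qquad \tilde{h}_A=\arg\max_{h}\Big\{\tfrac{h}{D}(\tilde{R}+\tilde{\Phi})-c_A(h)\Big\}$$
are literally the same program, so $\tilde{h}_A=h^*_A$ and the per-block profit functions whose difference appears in the net cost of attack from inequality~\eqref{IC2} coincide. This pins down the comparison and reduces the lemma to a statement about whether the security floor $\underline{h}_A$ forces $A$ off their preferred operating point.

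First I would state and justify the equality $\tilde{h}_A=h^*_A$ explicitly, pointing back to the definition of $\tilde{h}_A$ and the participation problem leading to $h^*_A$. Next, in the case $\tilde{h}_A\geq \underline{h}_A$, I would substitute $\max\{\tilde{h}_A,\underline{h}_A\}=\tilde{h}_A=h^*_A$ into the net cost bracket
$$\Big(\tfrac{h^*_A}{D}(R+\Phi)-c_A(h^*_A)\Big)-\Big(\tfrac{h^*_A}{D}(\tilde{R}+\tilde{\Phi})-c_A(h^*_A)\Big)\mathcal{L},$$
and observe that the two terms cancel term-by-term under $\tilde{R}+\tilde{\Phi}=R+\Phi$, giving net cost exactly zero.

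For the complementary case $\tilde{h}_A<\underline{h}_A$, the attacker is forced to deploy $\max\{\tilde{h}_A,\underline{h}_A\}=\underline{h}_A>h^*_A$. The key step is to invoke (weak) quasi-convexity of $c_A$ together with $\tilde{h}_A=h^*_A$ being the unique maximizer of the common program on the relevant range: any $h\neq h^*_A$ yields strictly lower profit, so
$$\tfrac{\underline{h}_A}{D}(\tilde{R}+\tilde{\Phi})-c_A(\underline{h}_A)\;<\;\tfrac{h^*_A}{D}(\tilde{R}+\tilde{\Phi})-c_A(h^*_A)\;=\;\tfrac{h^*_A}{D}(R+\Phi)-c_A(h^*_A),$$
which, subtracted from the opportunity cost term, yields a strictly positive net cost. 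The main obstacle I anticipate is justifying the strict inequality rather than a weak one: quasi-convexity of $c_A$ alone does not immediately guarantee that the optimum at $h^*_A$ is strictly better than at every $h>h^*_A$. I would handle this by noting that the hypothesis of non-linear costs rules out a flat plateau in the relevant region, so moving from $h^*_A$ up to $\underline{h}_A$ strictly decreases the per-period profit function, completing the argument.
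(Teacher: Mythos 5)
Your proposal is correct and follows essentially the same route as the paper's proof: both rest on the observation that when $\tilde{R}+\tilde{\Phi}=R+\Phi$ the attack-time cost-minimization program coincides with the pre-attack profit-maximization program, so the optimal values cancel exactly when $\underline{h}_A$ is slack and the forced deviation to $\underline{h}_A>\tilde{h}_A$ makes the net cost strictly positive when it binds. The only cosmetic difference is that the paper argues via the identity $\max_h\{\frac{h}{D}(R+\Phi)-c_A(h)\}=-\min_h\{c_A(h)-\frac{h}{D}(R+\Phi)\}$ on optimal \emph{values} rather than identifying the optimizers $\tilde{h}_A=h^*_A$, and your explicit worry about strictness of the inequality in the binding case is handled no more rigorously in the paper than in your sketch.
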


\begin{proof}
    First note that $\max_{h_A} \{\frac{h_A}{D}(R+\Phi)-c_A(h_A)\}=-\min_{h_A}\{c_A(h_A)-\frac{h_A}{D}(R+\Phi)\}$. Moreover, under  $\tilde{R}+\tilde{\Phi}=R+\Phi$, $\min_{h_A}\{c_A(h_A)-\frac{h_A}{D}(R+\Phi)\}=\min_{h_A}\{c_A(h_A)-\frac{h_A}{D}(\tilde{R}+\tilde{\Phi})\}$. If $\tilde{h}_A\geq \underline{h}_A$, the net cost of an attack, i.e., the left hand side of~\eqref{IC2} becomes
    $$\underbrace{\left(\max_{h_A} \{\frac{h_A}{D}(R+\Phi)-c_A(h_A)\}+ \min_{h_A}\{c_A(h_A)-\frac{h_A}{D}(\tilde{R}+\tilde{\Phi})\}\right)}_{=0}\mathcal{L}=0\, .$$

    When $ \underline{h}_A$ is binding, i.e.  $\tilde{h}_A<\underline{h}_A$, then $C_{attack}>\min_{h_A}\{c_A(h_A)-\frac{h_A}{D}(\tilde{R}+\tilde{\Phi})\}$, and thus the net cost of an attack is
    \begin{small}
    
    $$\hspace*{-1cm} \left(\max_{h_A} \{\frac{h_A}{D}(R+\Phi)-c_A(h_A)\}+C_{attack}\right)\mathcal{L}>\left(\max_{h_A} \{\frac{h_A}{D}(R+\Phi)-c_A(h_A)\}+ \min_{h_A}\{c_A(h_A)-\frac{h_A}{D}(\tilde{R}+\tilde{\Phi})\}\right)\mathcal{L}=0 \, .$$
    
    \end{small}
\end{proof}

\begin{lemma}\label{lem:smaller.mining.rewards}
    When $\tilde{R}+\tilde{\Phi}<R+\Phi$, the net cost of the attack is always strictly positive.
\end{lemma}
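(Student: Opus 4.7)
The plan is to parallel the proof of Lemma~\ref{lem:equal.mining.rewards}, replacing the zero obtained there by a strictly positive lower bound arising from the wedge $(R+\Phi)-(\tilde{R}+\tilde{\Phi})>0$. First I would introduce the shorthand $\pi(h,\rho)\equiv \frac{h}{D}\rho-c_A(h)$ and set $\hat{h}\equiv\max\{\tilde{h}_A,\underline{h}_A\}$, so that the net cost of attack on the left-hand side of~\eqref{IC2} reads $\bigl(\pi(h^*_A,R+\Phi)-\pi(\hat{h},\tilde{R}+\tilde{\Phi})\bigr)\mathcal{L}$.

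Rather than splitting into the two cases $\tilde{h}_A\ge \underline{h}_A$ and $\tilde{h}_A<\underline{h}_A$ and manipulating $\max_h \pi(h,\tilde{R}+\tilde{\Phi})$ directly (as in Lemma~\ref{lem:equal.mining.rewards}), I would evaluate $\pi(\cdot,R+\Phi)$ at $\hat{h}$. Since $h^*_A$ solves the maximization behind the participation constraint~\eqref{participation}, we have $\pi(h^*_A,R+\Phi)\ge \pi(\hat{h},R+\Phi)$, hence
$$\pi(h^*_A,R+\Phi)-\pi(\hat{h},\tilde{R}+\tilde{\Phi})\;\ge\;\pi(\hat{h},R+\Phi)-\pi(\hat{h},\tilde{R}+\tilde{\Phi})\;=\;\tfrac{\hat{h}}{D}\bigl((R+\Phi)-(\tilde{R}+\tilde{\Phi})\bigr).$$

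The final step is strict positivity: since $\underline{h}_A=\max\{\tfrac{1}{2}H,H-h^*_A\}>0$, we have $\hat{h}>0$, and by hypothesis $R+\Phi>\tilde{R}+\tilde{\Phi}$, so the right-hand side is strictly positive and the lemma follows. I do not anticipate a substantive obstacle; the only place to be careful is ensuring the argument goes through uniformly whether or not $\underline{h}_A$ binds and forces the attacker off his cost-minimizing point. The device of bounding the cost of mounting the attack by its value at the realized $\hat{h}$, instead of invoking an envelope inequality as in the equal-rewards case, handles both sub-cases in one stroke.
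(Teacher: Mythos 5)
Your proof is correct, and it uses the same two ingredients as the paper's argument --- the optimality of $h^*_A$ for the pre-attack profit problem and the reward wedge $(R+\Phi)-(\tilde{R}+\tilde{\Phi})>0$ --- but chains them in a slightly different order. The paper bounds the cost of mounting the attack from below by the \emph{unconstrained} minimum $\min_{h_A}\{c_A(h_A)-\frac{h_A}{D}(\tilde{R}+\tilde{\Phi})\}$, evaluates the wedge at the minimizer $\tilde{h}_A$, and then invokes optimality of $h^*_A$; you instead evaluate everything at the realized hash power $\hat{h}=\max\{\tilde{h}_A,\underline{h}_A\}$, applying optimality of $h^*_A$ first and the wedge second. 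Your ordering buys two small things: it treats the binding and non-binding cases of the $\underline{h}_A$ constraint in a single step, and it guarantees strictness from $\hat{h}\ge\underline{h}_A>0$, whereas the paper's strict inequality is taken at $\tilde{h}_A$ and would degenerate to an equality in the corner case $\tilde{h}_A=0$. Both arguments are otherwise equivalent, and your explicit lower bound $\frac{\hat{h}}{D}\bigl((R+\Phi)-(\tilde{R}+\tilde{\Phi})\bigr)\mathcal{L}$ on the net cost is a slightly more informative conclusion than bare positivity.
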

\begin{proof}
    For the linear cost, see Section~\ref{sec:linear}.

    For the non-linear case, note that 
    \begin{multline*}
        \min_{h_A}\{c_A(h_A)-\frac{h_A}{D}(\tilde{R}+\tilde{\Phi})\}
        \, =\, c_A(\tilde{h}_A)-\frac{\tilde{h}_A}{D}(\tilde{R}+\tilde{\Phi})
        \, >\\>\, c_A(\tilde{h}_A)-\frac{\tilde{h}_A}{D}(R+\Phi) \,
        \geq \, c_A(h^*_A)-\frac{h^*_A}{D}(R+\Phi)\, 
        =\, \min_{h_A}\{c_A(h_A)-\frac{h_A}{D}(R+\Phi)\}
    \end{multline*}
    Then the net cost of attack is
    $$\Bigg(\underbrace{\max_{h_A} \{\frac{h_A}{D}(R+\Phi)-c_A(h_A)\}}_{=-\min_{h_A}\{c_A(h_A)-\frac{h_A}{D}(R+\Phi)\}}+\underbrace{C_{attack}}_{\geq \min_{h_A}\{c_A(h_A)-\frac{h_A}{D}(\tilde{R}+\tilde{\Phi})\}}\Bigg)\mathcal{L} >0 $$
\end{proof}

\newpage

\typeout{}
\bibliography{main}
\bibliographystyle{apalike}

\end{document}